\newtheorem{theorem}{Theorem}
\crefname{theorem}{Theorem}{Theorems}
\newaliascnt{lemma}{theorem}
\newtheorem{lemma}[lemma]{Lemma}
\crefname{lemma}{Lemma}{Lemmas}
\newaliascnt{corollary}{theorem}
\newtheorem{corollary}[corollary]{Corollary}
\crefname{corollary}{Corollary}{Corollaries}
\newaliascnt{definition}{theorem}
\crefname{definition}{Definition}{Definitions}
\newaliascnt{conjecture}{theorem}
\crefname{conjecture}{Conjecture}{Conjectures}
\newaliascnt{remark}{theorem}
\newtheorem{remark}[remark]{Remark}
\crefname{remark}{Remark}{Remarks}
\newaliascnt{proposition}{theorem}
\crefname{proposition}{Proposition}{Propositions}
\newaliascnt{condition}{theorem}
\crefname{condition}{Condition}{Conditions}
\newaliascnt{example}{theorem}
\newtheorem{example}[example]{Example}
\crefname{example}{Example}{Examples}
\crefname{equation}{eq.}{eqs.}
\newcommand{\ii}{\mathrm{i}}
\renewcommand{\i}{\mathrm{i}}
\newcommand{\mc}[1]{\mathcal{#1}}
\renewcommand{\d}{\mathrm{d}}
\newcommand{\e}{\mathrm{e}}
\definecolor{myred}{RGB}{221,0,0}
\definecolor{myblue}{RGB}{0,0,221}
\definecolor{citegreen}{RGB}{0,165,0}
\title{A sharper Magnus expansion bound woven in binary branches}
\author[1]{Harriet Apel}
\author[1]{Toby Cubitt}
\author[2]{Emilio Onorati}
\affil[1]{Department of Computer Science, University College London, UK}
\affil[2]{Zentrum Mathematik, Technische Universit{\"a}t M{\"u}nchen, DE}
\date{}
\begin{document}

\maketitle
\vspace{-0.75cm}

\begin{abstract}
  The Magnus expansion provides an exponential representation of one-parameter operator families, expressed as a series expansion in its generators.
  This is useful for example in quantum mechanics for expressing a unitary evolution determined by a time-dependent Hamiltonian generator of the dynamics.
  The solution is constructed as a series expansion in terms of increasingly complex nested commutators that rapidly become challenging to compute directly.
  This work establishes a universal upper bound, agnostic to the generator, on the error incurred when the Magnus expansion is truncated at an arbitrary given order.
  The main technical ingredient of the proof is the binary tree representation introduced by Iserles and N{\o}rsett from which we derive a recursion formula to delimit the magnitude of any term in the expansion.
  We complement our analytic results for the truncation error with explicit calculation of the first 24 terms in the Magnus series, illustrating that they follow the scaling behaviour we have derived.
  With these findings we aim to contribute to the understanding of the accuracy and limitations of the Magnus expansion technique, and to provide a sharper bound for approximating quantum dynamics without requiring assumptions on the structure of their generators.
\end{abstract}

\section{Introduction}

The Magnus expansion is an exponential representation of the solution of an ordinary linear differential equation~\cite{Magnus1954}.
Magnus himself emphasised that his work was inspired by Friedrichs’ operator theory in quantum mechanics~\cite{Friedrichs1953}, and cited Feynman’s operator calculus~\cite{Feynman1951} as antecedent.
Thus, from its origin, the Magnus expansion has been closely tied to quantum theory. In particular, it provides a natural solution for the Schr\"{o}dinger equation, where it expresses the exponent of the time-evolution operator of a quantum system described by a time-dependent Hamiltonian operator. In other words, the Magnus expansion is a generalisation of the Baker–Campbell–Hausdorff formula \cite{rossmann2002lie} where the dynamics is not restricted to piece-wise constant.

The Magnus operator can be expanded in a power series of the commutators of the Hamiltonian at different times.
Initially the series terms were presented recursively~\cite{Burum}, while later on \cite{Iserles} provided an explicit form of the expansion. An understanding of the limits of convergence of the series has been elucidated~\cite{moan1998, moan2002backward, Moan2008, Sanchez2011}.
In practice, the expansion can be truncated at finite order within an explicitly established convergence radius.

A key advantage of the Magnus expansion, compared to other similar expansions like the Dyson series~\cite{DysonOriginal}, is that it preserves the unitarity of the time-evolution operator at any truncation order.\footnote{This can be seen from the form of the $n$-th term in the expansion that we illustrate later in \cref{eqn_Magnus_tree}, since if $H(t)$ belongs to a Lie algebra, then all sums of integration over nested commutators of $H(t)$ also stay within to the Lie algebra
--	and consequently any truncation of the Magnus expansion does too.
This is important physically, since it implies that any truncation of the series will still generate physically meaningful unitary dynamics.}
Importantly, it captures the higher-order terms that arise from the commutators of the Hamiltonian at different times that are neglected by the Trotter approximation~\cite{Trotter1959}.
Hence, the Magnus expansion is the preferred tool for several applications~\cite{blanes09} like high precision simulations~\cite{LeimkuhlerReich2005,Sharma,chen2023,Moreno2024,Bosse2025, Fang2025} (particularly for Hamiltonians exhibiting strong non-commutativity between different times), studies of dissipative dynamics~\cite{Bezjav,Reimer}, matter-radiation approximation~\cite{Macri2023,ture2024}, and quantum optimal control~\cite{Ma2020,marecek2020}.
The complexity of the Hamiltonian, the order of the truncation, and the time interval being considered all affect the quality of the Magnus expansion approximation, and has been investigated in different works~\cite{Bialynicki,Hochbruck2003,Sanchez2011,Mori16,Kuwahara,onebound}.

In this note, we establish an explicit universal upper bound on the truncation error at arbitrary order, presented here in closed form for the first time in the structure-free setting, without making any assumptions on the generator (which need not even be a Hamiltonian). In addition, we prove a scaling law for individual Magnus terms that is strictly sharper than previous universal results.

\medskip

We introduce in \cref{sec:Magnus} the Magnus expansion as the exponential representation of the solution of a differential equation, that is particularly relevant e.g.\ in the context of a unitary evolution of a time-dependent generator in quantum dynamics.
In \cref{sec:previous_bounds} we then summarise the previous known bounds to provide context for our result, before we present our new truncation bound in \cref{sec:main_results} with an outline of the proof.
The remainder of the paper gives the necessary technical background before setting out the full proof of this bound.

\subsection{The Magnus expansion}\label{sec:Magnus}

The time-dependent Schr\"{o}dinger equation describes the evolution of a state governed by Hamiltonian dynamics,
\begin{equation}\label{eq:QM}
	\i \hbar \frac{d}{dt} \psi(t) = H(t) \psi(t),
\end{equation}
and is an ordinary linear differential equation.
The solution is given by
\begin{equation}
	\psi(t) = U(t)\psi(t=0),
\end{equation}
where $U(t)$ is the time-evolution operator.
For a general time-dependent Hamiltonian, $H(t)$, the evolution operator is expressed in terms of a complex time-ordered exponential,
\begin{equation}
U(t) = \mathbf T \left\{\exp \left[-\frac{\i}{\hbar}\int_0^t H(\tau)d\tau \right] \right\}.
\end{equation}
where $\mathbf T$ is the time-ordering operator.
The Magnus series~\cite{Magnus1954} provides an exponential representation of the solution of such equations.
Applying this approach, the evolution operator can be expressed as
\begin{equation}
U(t) = \exp\left[ \mathcal{M}(t)\right],
\end{equation}
where the exponent is given by an infinite series expansion with terms involving multiple integrals of nested commutators,
\begin{equation}\label{eqn Magnus basic}
\mathcal{M}(t) = \sum_{n=1}^\infty M_n(t).
\end{equation}

Explicitly, the first four terms in the expansion read,
\begin{adjustwidth}{-1.2cm}{-1.2cm}
\begin{subequations}
\begin{align}
M_1(t)  &= \frac{1}{(\ii\hbar)}\int_0^t H(t_1) dt_1\label{eqn mag term 1}\\
M_2(t)  &= \frac{1}{2(\ii\hbar)^2}\int_0^t dt_1 \int_0^{t_1}dt_2 \left[H(t_1),H(t_2) \right]\label{eqn mag term 2}\\
M_3(t)  &= \frac{1}{6(\ii\hbar)^3} \int_0^t dt_1 \int_0^{t_1}dt_2 \int_0^{t_2}dt_3 \left(\left[H(t_1), \left[H(t_2),H(t_3) \right] \right]
+ \left[ H(t_3),\left[ H(t_2),H(t_1)\right]\right] \right) \label{eqn 3rd magnus 1}\\
M_4(t)   &= \frac{1}{12(\ii\hbar)^4} \int_0^t dt_1 \int_0^{t_1} dt_2 \int_0^{t_2} dt_3 \int_0^{t_3} dt_4 \left(\left[\left[\left[H(t_1),H(t_2) \right] ,H(t_3)\right],H(t_4) \right] \right. \\
&+ \left[H(t_1),\left[ \left[ H(t_2),H(t_3)\right],H(t_4)\right] \right]
+ \left[H(t_1),\left[H(t_2),\left[H(t_3),H(t_4)\right]\right]\right]
+\left. \left[ H(t_2), \left[ H(t_3),\left[H(t_4),H(t_1)\right]\right]\right]\right).\label{eqn mag term 4}
\end{align}
\end{subequations}
\end{adjustwidth}
The above gives a sense of higher-order terms growing increasingly complex, hence more systematic methods are needed to calculate the expansion to a given order.
Our approach will be the one of binary trees representation developed in \cite{Iserles,IserlesTree}, which we recap in \cref{sec:binary_trees_explained}.

We now consider \emph{truncations} of the Magnus expansion, that is the sum of the first $N$ (say) terms:
\begin{equation}
	\mathcal{M}^{(N)}(t) = \sum_{n=1}^N M_n (t) \ .
\end{equation}
\cite{Magnus1954} demonstrated that the Magnus expansion is equal to the first order Magnus approximation if and only if the following condition is satisfied for all times $t$,
\begin{equation}
	\left[H(t), \int_0^t H(\tau) d\tau \right] = 0.
\end{equation}
This condition is not fulfilled for general time-dependent Hamiltonians, therefore generally truncation will produce an error -- the quantity that we aim to bound universally in this work.

\medskip

\paragraph{Remark on generality of our results}
In these notes we adopt the notation~$H$ for the operators in the nested commutators, tied to the conventional description for Hamiltonians in the Schr\"{o}dinger equation.
We emphasise however that our results apply in full generality to differential equations of the form  \cref{eq:QM} for one-parameter operator families, without requiring Hermiticity as postulated by quantum mechanics.
Thus our bound is not confined to instances of quantum mechanics and the Lie algebra of the unitary group, but extends equally to all finite-dimensional Lie algebras.

\subsection{Previous bounds and convergence radius}\label{sec:previous_bounds}

In the structure-free case, \cite{moan2002backward} derived a bound on the individual terms of the series,
\begin{equation}\label{eq:reference_bound_on_Magnus}
	\norm{M_n(t)}_\mathrm{op} \leq \pi \left(\frac{1}{\xi} \int_0^t \norm{H(s)}_\mathrm{op}  \, ds \right)^n
\end{equation}
for a specific universal constant $\xi$ expressing the convergence radius of the expansion.
This quantity has been investigated in \cite{Blanes98,moan1998}, who independently obtained the same value, approximated by
\begin{equation}\label{convergence_radius}
	\xi \approx 1.086869.
\end{equation}
The radius has been further enlarged to $\xi < \pi$ for the specific case of real matrices~\cite{Moan2008}.

Recently, a new bound on the approximation of a unitary evolution based on truncation of the Magnus expansion has been published \cite[Theorem~1]{fang2025magnus}, highlighting the timely relevance of such results:
\begin{equation}
	\norm{U-U_N}_\mathrm{op} \leq  \mathrm{const} \cdot
	\left( \max_{N+1 \leq q \leq N^2+2N}  \alpha_{\mathrm{comm},q}^{1/q} (H) \right)^{N+1} \, t^{N+1} \ ,
\end{equation}
where $\alpha_{\mathrm{comm},q}(H)$ is the norm of the largest commutator over the set of all nested commutators of depth~$q$ and times $t_1,\dots, t_q$ evaluated on inputs $H(t_1), \dots, H(t_q)$. This is based on the bound of the individual Magnus terms derived in the same work (cfr.~eq.~(61)),
\begin{equation}
	\norm{M_n}_\mathrm{op} \leq \frac{\alpha_{\mathrm{comm},n}(H) \, t^n}{n^2} \ .
\end{equation}
However, unless additional structural information on the commutators is available, one can only invoke the generic inequality
\begin{equation}
	\alpha_{\mathrm{comm},n}(H)\;\leq\;2^{\,n-1}\,\max_{t'\in[0,t]}\|H(t')\|^n,
\end{equation}
here, which gives a factor~$2^n$ yielding a bound that is exponentially weaker than the bound derived in this work.

\smallskip

Structured instances are not the actual focus of these notes, but for broader interest we will cite a few representative results. For example, when one considers $\ell$-local Hamiltonians with interaction strength $J$, \cite[Lemma~1]{Kuwahara} has derived a scaling for the $n$-th order Magnus term,
\begin{equation}\label{eq:kuwahara_bound}
	\norm{M_n}_\mathrm{op} \leq \mathrm{const} \cdot \frac{(2 \ell  J)^n\, n!}{(n+1)^2} \ .
\end{equation}
While this refines \cref{eq:reference_bound_on_Magnus} by replacing the generic norm $\norm{H(s)}_\mathrm{op}$ with explicit coupling parameters, the result in \cref{eq:kuwahara_bound} comes at the cost of a factorial growth~$n!$ and does not provide a convergence radius.
Recent work~\cite[Theorem 1]{Sharma} considers geometrically local Hamiltonians, $H=\sum_{i=1}^m h_i$, and improves the error at truncation index~$N$ and with respect to the number of interaction terms~$m$, reducing a generic $O (m^{N+1})$~dependence to a linear scaling~$O(m)$.

There are also examples of error bounds for particular degrees of truncation. \cite[Theorem 3]{Wilcox1972} gives an error bound when truncating the Magnus expansion at first order.
Here the residuum of the truncation is expressed as a function of several, unevaluated nested integrals, rendering the result mainly useful for analysing differential equations of specific form.

\subsection{Main results}\label{sec:main_results}

Our bound on arbitrary terms of the Magnus expansion improves previous results in the literature presented in \cref{sec:previous_bounds} for the structure-free case.

\begin{lemma}[Upper bound for Magnus terms]\label{lemma:bound_magnus_term}
	Let $H(t)$ be a general time-dependent Hamiltonian with $h_{\max} \coloneqq \max_{t'\in[0,t]}\{\norm{H(t')}_\mathrm{op}\}$.
	Then the operator norm of the $n$-th term in the corresponding Magnus expansion $\mathcal{M}(t) = \sum_{n=1}^\infty M_n(t)$ is upper bounded by
	\begin{equation}\label{eq:Magnus_main_bound}
		\norm{M_n(t)}_\mathrm{op}
		\leq
		4 \, \frac{(\delta_\xi \,  h_{\max} \cdot t)^n}{n^2}
	\end{equation}
	for all $n \geq 1$, where $\delta_\xi=1/ \xi = 0.920075$.
\end{lemma}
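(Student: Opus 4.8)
The plan is to turn the Iserles--N\o rsett binary tree representation of $M_n(t)$ into a scalar recursion for an upper bound and then solve that recursion as sharply as possible. Writing $A(t)=\tfrac{1}{\i\hbar}H(t)$ so that $\norm{A(t)}_\mathrm{op}\le h_{\max}$, I would first record that each $M_n$ is a finite sum, indexed by binary trees of order $n$, of iterated time integrals of nested commutators carrying rational weights assembled from Bernoulli numbers. Applying $\norm{[X,Y]}_\mathrm{op}\le 2\norm{X}_\mathrm{op}\norm{Y}_\mathrm{op}$ tree by tree, and carrying out the outermost integration (which supplies a factor $1/n$), collapses the tree sum into a convolution recursion for a sequence $a_n$ with $\norm{M_n(t)}_\mathrm{op}\le a_n\,(h_{\max}t)^n$. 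Concretely I expect
\[
  a_1=1,\qquad n\,a_n=\sum_{j=1}^{n-1}\frac{|B_j|\,2^{j}}{j!}\sum_{k_1+\dots+k_j=n-1}a_{k_1}\cdots a_{k_j}\qquad(n\ge 2),
\]
where the binary branching of the trees is exactly what organizes the inner sum as a multiplicative convolution and the $|B_j|$ come from the $\operatorname{dexp}^{-1}$ weights.

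Next I would analyse this recursion through its generating function $F(x)=\sum_{n\ge1}a_nx^n$. Summing the recursion yields the first order equation $F'(x)=1+f(F(x))$ with $f(y)=\sum_{j\ge1}\frac{|B_j|2^j}{j!}y^j=y+1-y\cot y$, hence $F'=2+F-F\cot F$ with $F(0)=0$. Separating variables, the solution climbs to the first pole of $\cot$ at $F=\pi$ at the finite value
\[
  x=\xi=\int_0^{\pi}\frac{dy}{2+y-y\cot y}\approx 1.0869,
\]
which recovers exactly the Moan--Blanes convergence radius and fixes the exponential rate $a_n^{1/n}\to\delta_\xi=1/\xi$. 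This already reproduces the base $\delta_\xi^n$, so the genuine content of the lemma is the polynomial prefactor $4/n^2$.

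To upgrade the rate to a universal bound valid for \emph{every} $n$ I would prove $a_n\le 4\delta_\xi^n/n^2$ by strong induction. Feeding the hypothesis into the recursion reduces matters to controlling the constrained convolutions $S_j(n-1)=\sum_{k_1+\dots+k_j=n-1}\prod_i k_i^{-2}$, where the decisive structural fact is the \emph{convolution stability} of the profile $k^{-2}$: its self-convolution obeys $\sum_{k=1}^{n-1}k^{-2}(n-k)^{-2}\le c\,n^{-2}$, so the $1/n^2$ shape is reproduced rather than degraded under the branching. The defining integral for $\xi$ is then used to absorb the accumulated Bernoulli and commutator weights into the single factor $\delta_\xi^n$, while the constant $4$ is pinned by checking the small-$n$ base cases directly against \cref{eqn mag term 1,eqn mag term 2,eqn 3rd magnus 1,eqn mag term 4}.

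The hard part will be the summation over the branching order $j$. Since $S_j(n-1)$ grows like $j\,\zeta(2)^{j-1}n^{-2}$, after including the commutator factor $2^j$ the $j$-series is weighted by $\frac{|B_j|}{j!}(2\zeta(2))^j$; because $2\zeta(2)=\pi^2/3$ sits only just outside the radius $\pi$ of the Bernoulli series $y\cot y$, a crude bound on $S_j$ overshoots and the induction does not close on the nose. I therefore expect the crux of the argument to be a refined estimate of $S_j(n-1)$ that exploits the simplex constraint $\sum_i k_i=n-1$ to pull the effective argument back strictly inside $\pi$, keeping the $j$-series summable uniformly in $n$ while preserving the $n^{-2}$ decay; once that estimate is in place, everything else is bookkeeping around the tree recursion and the value of $\xi$.
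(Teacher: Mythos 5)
Your first half is sound and runs parallel to the paper's actual argument. Your recursion for $a_n$ is exactly the paper's tree-coefficient recursion (\cref{thm:complete_recursion_fromula}) with the commutator factors absorbed, $a_n = 2^{n-1}\nu_n$, and your generating-function ODE $F' = 2 + F - F\cot F$ is the (correctly inhomogeneous) rescaling of \cref{eq:differential_gen_f}; in fact you are more careful than the paper here, since you retain the $+1$ coming from $a_1$ and obtain the convergence radius as the explicit integral $\xi = \int_0^{\pi}\frac{dy}{2+y-y\cot y}\approx 1.0869$, which checks out numerically against \cref{convergence_radius}. The paper instead truncates the non-elementary antiderivative \cref{eq:gen_f_series} to $2\log f - f/3$, inverts via the Lambert $W$-function, fits the constant $\beta$ numerically, and then \emph{adopts} $\xi$ conservatively; your route to the exponential rate $\delta_\xi^n$ is arguably cleaner and more self-contained.

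The genuine gap is at the decisive step: the uniform prefactor $4/n^2$ for \emph{all} $n\geq 1$, which is the actual content of the lemma, and which your proposal does not establish. As you yourself concede, the strong induction $a_n \leq C\,\delta_\xi^n/n^2$ does not close: feeding the hypothesis into the recursion produces the weighted series $\sum_j \frac{\abs{B_j}}{j!}\,(2C\zeta(2))^j$ (up to the constants in your simplex estimate $S_j(m) \lesssim j\,\zeta(2)^{j-1} m^{-2}$, and note you dropped the $C^j$ from the $j$-fold product of hypotheses), whose effective argument with $C=4$ lies far outside the radius $2\pi$ of the Bernoulli series -- so the argument hinges entirely on a ``refined estimate of $S_j$'' that is conjectured, not proved. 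Moreover, checking base cases only against $M_1$--$M_4$ cannot anchor the constant. The paper sidesteps this delicacy completely: it extracts the subexponential $n^{-2}$ factor from the explicit coefficient formula \cref{nu_formula} produced by the Lambert-$W$ inversion (Stirling analysis in \cref{lemma:scaling_of_coefficients}), obtaining a rate $\delta \approx 0.9024$ \emph{strictly below} $\delta_\xi = 0.9201$, so the geometric slack $(\delta/\delta_\xi)^n$ absorbs all polynomial and constant losses asymptotically; the global constant ($8$ before halving, hence the $4$ in \cref{eq:Magnus_main_bound}) is then fixed by directly computing $\nu_n$ from the recursion \cref{eq:beautiful_recursion} for all $n \leq 24$ (\cref{sec:compute_trees}, \cref{rmk:analytic_and_computation}). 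In short: your induction attempts to run at exactly the critical rate $\delta_\xi$, where it provably strains, whereas the paper works at a strictly subcritical rate plus finite verification. To salvage your route you would either need the refined simplex estimate you postulate, or you should replace the induction with a singularity/coefficient analysis of your own ODE solution near $F=\pi$ -- which is essentially what the paper's Lambert-$W$ computation does.
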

 \noindent Compared to \cref{eq:reference_bound_on_Magnus}, we remark that this bound does not involve integrals  and it also comes with a $n^{-2}$ factor that tightens the decay rate, while retaining the convergence radius~$\xi=1/\delta_\xi$ of \cref{convergence_radius}.

Using the above result, we establish a closed-form universal bound on the residuum of the truncation of the Magnus expansion, valid at any order:

\begin{theorem}[Error bound for truncated Magnus series]\label{Main_Thm}
	Consider a general time-dependent Hamiltonian $H(t)$ with $h_{\max} \coloneqq \max_{t'\in[0,t]}\{\norm{H(t')}_\mathrm{op}\}$. 
	Denote its Magnus series expansion by $\mathcal{M}(t) = \sum_{n=1}^\infty M_n(t)$ and the corresponding  truncation at term $N$ by $\mathcal{M}^{(N)}(t) = \sum_{n=1}^N M_n (t)$.
	Let $\delta_\xi \, h_{\max} \cdot  t < 1$.
	Then the truncation error is bounded by
	\begin{equation}
		\norm{\mathcal{M}(t) - \mathcal{M}^{(N)}(t)}_\mathrm{op}
		\leq
		\frac{4}{(N+1)^2} \frac{(\delta_\xi \, h_{\max} \cdot  t)^{N+1}}{1- \delta_\xi \, h_{\max} \cdot t} \ .
	\end{equation}
\end{theorem}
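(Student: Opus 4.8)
The plan is to obtain the theorem as an essentially immediate corollary of \cref{lemma:bound_magnus_term}, since the truncation residuum is literally the tail of the Magnus series and the per-term bound is already in hand. First I would write the error as the tail sum $\mathcal{M}(t) - \mathcal{M}^{(N)}(t) = \sum_{n=N+1}^{\infty} M_n(t)$ and apply the triangle inequality for the operator norm to pass to $\norm{\mathcal{M}(t) - \mathcal{M}^{(N)}(t)}_\mathrm{op} \leq \sum_{n=N+1}^{\infty} \norm{M_n(t)}_\mathrm{op}$. One should note that this step tacitly requires the tail to converge, which is exactly what the hypothesis $\delta_\xi \, h_{\max} \cdot t < 1$ guarantees once the per-term estimate is inserted.

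Next I would invoke \cref{lemma:bound_magnus_term} to replace each summand, bounding the right-hand side by $\sum_{n=N+1}^{\infty} 4 \, x^n / n^2$, where I abbreviate $x \coloneqq \delta_\xi \, h_{\max} \cdot t$ for readability. This reduces the entire problem to estimating a purely scalar tail. To control it I would use the coarse but clean bound $1/n^2 \leq 1/(N+1)^2$, valid for every $n \geq N+1$, which lets me pull the prefactor out of the sum. The surviving geometric series $\sum_{n=N+1}^{\infty} x^n$ converges precisely because $x < 1$ and sums to $x^{N+1}/(1-x)$. Collecting the constant $4$ and the factor $1/(N+1)^2$ then reproduces the claimed bound verbatim.

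The main obstacle, frankly, does not reside in this argument at all: all the analytic substance has been front-loaded into \cref{lemma:bound_magnus_term}, whose proof via the binary-tree recursion of Iserles and N\o rsett is where the real work lies. The only genuine modelling choice remaining here is how tightly to estimate the tail $\sum_{n > N} x^n / n^2$. Replacing $1/n^2$ by its maximal value $1/(N+1)^2$ is lossy but produces the closed-form geometric series and hence the clean statement; a sharper treatment — for instance comparing against a Lerch transcendent, or an integration-by-parts / Abel-summation argument that keeps the $n^{-2}$ decay inside the sum — would marginally tighten the $(N+1)^{-2}$ prefactor at the cost of destroying the simple closed form. I would therefore deliberately retain the coarse estimate, since the resulting expression is exactly the advertised universal bound and already preserves both the $n^{-2}$-type decay and the convergence radius $\xi = 1/\delta_\xi$.
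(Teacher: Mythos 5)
Your proof is correct and follows exactly the paper's own argument: apply \cref{lemma:bound_magnus_term} termwise to the tail $\sum_{n\geq N+1} M_n(t)$, replace $1/n^2$ by $1/(N+1)^2$, and sum the resulting geometric series in $\delta_\xi\, h_{\max} t$. Your closing remark on the deliberate lossiness of the uniform denominator substitution even mirrors the paper's \cref{cor:stronger_bound}, which retains the $m^{-2}$ factor inside the sum.
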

\noindent
To our knowledge, our bounds are the tightest results to date when considering dynamics without assumptions nor information requirements on the structure of the time-dependent Hamiltonian.
This improvement is of particular interest for truncations at small~$N$, when the compound effect of the exponential scaling is not yet manifest, which is often the relevant case in practical applications.
Additionally, in \cref{cor:stronger_bound} we provide a second tighter bound for $\norm{\mathcal{M}(t) - \mathcal{M}^{(N)}(t)}$ containing the unsimplified infinite sum.

\medskip

\emph{Proof outline:} After recapitulating essential background on the relation between the Magnus expansion and binary trees (cfr.\ \cref{sec:integrals_and_trees,sec:ME_and_trees}), we observe in \cref{sec:integral_coefficient} that the direct evaluation of the nested integrals associated with the nested commutators in the Magnus expansion follows a recursion formula reflecting the fractal structure of binary trees. This connects to the binary tree formalism developed by Iserles and N{\o}rsett, and mirrors the exact same pattern of the weights of the commutators derived in those earlier works.
We combine these two elements into what we call the \emph{tree coefficients} and derive a second recursion for them (cfr. \cref{thm:complete_recursion_fromula}).
To establish the scaling of these coefficients, we encode them into a generating function~$f$ (cfr. \cref{sec:analysis_recursion}). The recursion relation yields a differential equation for~$f$ which provably has no elementary solution, but which we can still represent through its series expansion.
At this point, we apply the Lambert $W$-function~\cite{Corless1996} to retrieve an explicit expression for~$f$. This allows us to isolate the tree coefficients and characterise their scaling behaviour as detailed in \cref{lemma:scaling_of_coefficients}.
We then fix the global constant with a direct computation of the first 24 terms (cfr. \cref{sec:compute_trees}) following the recursive construction of  \cref{thm:complete_recursion_fromula}.
Finally, we relate the tree coefficients and the terms of the Magnus series to obtain our main bounds, proven in \cref{sect:bound}.

\section{Graph theoretical analysis of the Magnus expansion}\label{sec:binary_trees_explained}

All Magnus terms comprise of a linear combination of intergrated commutators.
While the expressions in~\cref{eqn mag term 1,eqn mag term 2,eqn 3rd magnus 1,eqn mag term 4} seem manageable, higher-order Magnus terms become increasingly complex .
To handle this, \cite{Iserles} has given an expression for the Magnus expansion as a sum over full binary trees, which will be the main ingredient used here to establish the scaling of the decay of the Magnus terms, and consequently to bound the truncation of the series.
Hence this section will review this representation to provide the necessary context for the later proofs.

\medskip

A \emph{full binary tree} is a 2D graphical object consisting of \emph{nodes} in a hierarchical structure, from bottom to top, starting from a single node called the \emph{root}. Each node is connected by edges to its \emph{successors} that must be exactly two or zero. A node without successors is called a \emph{leaf}. Any node except the root has exactly one parent node.

\begin{figure}[ht!]
	\begin{center}
		\begin{tikzpicture}[scale=0.40]

			\fill (0,0) circle (6pt);
			\fill (-3,2) circle (6pt);
			\fill (3,2) circle (6pt);
			\fill (-4.5,4) circle (6pt);
			\fill (-1.5,4) circle (6pt);
			\fill (4.5,4) circle (6pt);
			\fill (1.5,4) circle (6pt);
			\fill (.5,6) circle (6pt);
			\fill (2.5,6) circle (6pt);

			\draw[thick] (0,0) -- (-3,2);
			\draw[thick] (0,0) -- (3,2);
			\draw[thick] (-3,2) -- (-4.5,4);
			\draw[thick] (-3,2) -- (-1.5,4);
			\draw[thick] (3,2) -- (4.5,4);
			\draw[thick] (3,2) -- (1.5,4);
			\draw[thick] (1.5,4) -- (0.5,6);
			\draw[thick] (1.5,4) -- (2.5,6);

		\end{tikzpicture}
	\end{center}
	\caption{A full binary tree with 5 leaves}\label{fig:5_leaves_tree}
\end{figure}
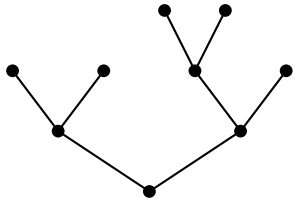

\subsection{Nested integrals as binary trees}\label{sec:integrals_and_trees}

~\cite{Iserles,IserlesTree} present an encoding of integration and commutation into graph theory; more precisely, by graphically representing integrated nested commutators in terms of binary trees.
The construction works as follows:
\begin{enumerate}[label=(\alph*)]
	\item Integration of a operator: appending a singular node below the root of the associated tree, which becomes the new root;
	\item Commutation between two operators: integrate the right sub-tree as per rule~(a), then join the two roots with a new node which becomes the new root;
\end{enumerate}
All of the more complex forms of high-order integrated commutators can be expressed in this language.
We will denote by $\mc T_n$ the set of full binary trees with $n$ leaves.
We refer the reader to~\cref{appen Example 3rd Magnus term} for the explicit construction of an example~$\mc T_3$.

\begin{figure}[h]
	\captionsetup[subfigure]{skip=15pt}
	\begin{subfigure}{0.38\textwidth}
		\centering
		\begin{tikzpicture}
			\begin{scope}[xscale=0.42, yscale=0.40]
				\fill (0,0) circle (6pt);
				\node at (0,2) [above] {$\tau_1$};
				\draw[thick] (0,0) -- (0,2);
			\end{scope}
			\node [align=left, anchor=west] at (0.5,  .75) {$\sim$};
			\node [align=left, anchor=west] at (1.25,  .75) {$\int_0^\kappa H_{\tau_1} dt_1$};
		\end{tikzpicture}
		\caption{rule 1: integration of a tree~$\tau_1$.}
		\label{integration_tree}
	\end{subfigure}
	\begin{subfigure}{0.65\textwidth}
		\centering
		\begin{tikzpicture}
			\begin{scope}[xscale=0.42, yscale=0.40]
				\fill (0,0) circle (6pt);
				\fill (2,2) circle (6pt);
				\draw[thick] (0,0) -- (2,2);
				\draw[thick] (0,0) -- (-2,2);
				\draw[thick] (2,2) -- (2,3.5);
				\node at (-2,2) [above] {$\tau_1$};
				\node at (2,3.5) [above] {$\tau_2$};
			\end{scope}
			\node [align=left, anchor=west] at (1.5, .75) {$\sim$};
			\node [align=left, anchor=west] at (2.25, .75) {$\left[H_{\tau_1}, \int_0^\kappa H_{\tau_2} dt_2 \right]$};
		\end{tikzpicture}
		\caption{rule 2: commutator of two subtrees $\tau_1$ and $\tau_2$.}
		\label{fig_int_and_comm}
	\end{subfigure}
	\caption{Constructing higher-order nested commutators with graphical rules.}
\end{figure}
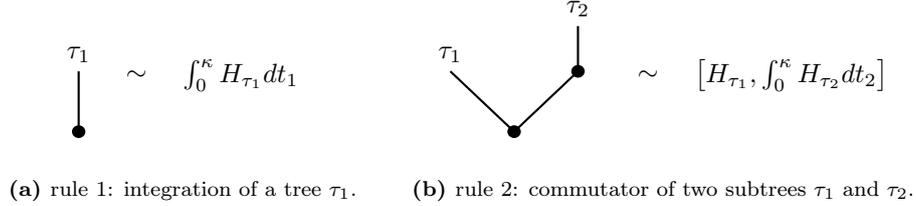

Note the planar trees generated by the steps described above are not full binary trees, since their nodes do not have exactly 0 or 2 successors.
However, by starting with an enumerated list of true binary trees, we can construct all the terms in the Magnus expansion by attaching an additional node to the right node of each branch to signify integration.
An additional node is also appended to the root of any tree, representing the outermost integration with the total run-time~t as the end point.
From now on, we will use the term ``tree'' interchangeably to refer to these quasi-binary trees and to true binary trees.

An arbitrary tree $\tau \in \mc T_n$ has a unique representation -- shown in \cref{fig_tree_rep} -- in terms of a \emph{left-ordered} subtree structure. That is, $\tau$ can be considered as a single left-branch where subtrees $\tau_1,\tau_2,\dots, \tau_r$ are grafted on. Note that this will imply a partition of $n-1$ (the total number of leaves except from the leftmost one) into $r$ parts (the number of subtrees).
This representation is important in our construction, as the recursive formulas we will derive will refer to this unique way of characterizing a binary tree. We will write $\tau=(\tau_1,\tau_2,\tau_3,...,\tau_r)$ when referring to this structure.

\begin{figure}[htbp]
	\captionsetup{skip=15pt}
	\centering
	\begin{tikzpicture}[scale=0.345]
		\draw[thick] (0,0) -- (-4,4);
		\draw[dotted, thick] (-4,4) -- (-7,7);
		\draw[thick]  (-7,7) -- (-9,9);

		\draw[thick] (0,0) -- (2,2);
		\draw[thick] (-2,2) -- (0,4);
		\draw[thick] (-4,4) -- (-2,6);
		\draw[thick] (-7,7) -- (-5,9);

		\foreach \x/\y in {0/0, -2/2, -4/4, -7/7, 2/2, 0/4, -2/6, -5/9, -9/9} {
			\fill (\x,\y) circle (6pt);}

		\draw[thick] (2,2) -- (2,3.5);
		\node at (2,3.5) [above] {$\tau_1$};
		\draw[thick] (0,4) -- (0,4+1.5);
		\node at  (0,4+1.5)[above] {$\tau_2$};
		\draw[thick] (-2,6) --(-2,6+1.5);
		\node at  (-2,6+1.5) [above] {$\tau_3$};
		\draw[thick] (-5,9) --(-5,10.5);
		\node at  (-5,10.5) [above] {$\tau_r$};

		\fill (0,-1.5) circle (6pt);
		\draw[thick] (0,-1.5) -- (0,0);
	\end{tikzpicture}
	\caption{Every tree $\tau$ has a unique left-ordered representation with grafted sub-trees, $\tau=(\tau_1,\tau_2,\tau_3,...,\tau_r)$.}
	\label{fig_tree_rep}
\end{figure}
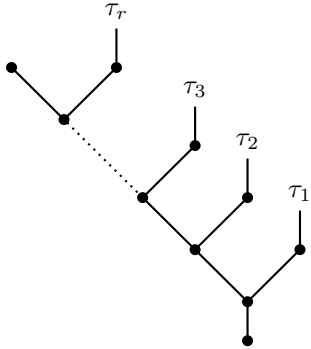

\FloatBarrier 
\subsection{The Magnus expansion with binary trees}\label{sec:ME_and_trees}

In this binary tree representation, the Magnus expansion can be written as:
\begin{equation}\label{eqn_Magnus_tree}
\mathcal{M}(t) = \sum_{n=1}^\infty M_n(t) = \sum_{n=1}^\infty \sum_{\tau \in \mathcal{T}_{n}} \alpha_\tau \int_{0}^t H_\tau (\kappa )d \kappa.
\end{equation}
Given a tree $\tau\in\mc T_n$, $H_\tau$ corresponds to a $(n-1)$-fold integral of a $(n-1)$-nested commutator involving $n$ copies of the original time-dependent Hamiltonian $H(t)$, where the particulars of the structure of the commutator and the boundaries of the integration are given by the structure of~$\tau$.
To exemplify how one arrives at the usual expressions for the Magnus terms, \hyperref[appen Example 3rd Magnus term]{Appendix~A} demonstrates how the third Magnus term in~\cref{eqn 3rd magnus 1} can be graphically represented via the binary tree expansion.

The coefficients $\alpha_\tau$ in~\cref{eqn_Magnus_tree} can be evaluated recursively using the unique representation of the sub-trees $\tau=(\tau_1,\tau_2,\dots, \tau_r)$:
\begin{equation}\label{eqn coeff recursive}
\alpha_{(\tau_1,...,\tau_r)} = \frac{B_{r}^+}{r!} \prod_{i=1}^\ell \alpha_{\tau_i} \  ,
\end{equation}
where $B_r^+$ are the Bernoulli numbers (with $B_1 = 1/2$). Recalling that  $B_r=0$  for all uneven $r \geq 3$, we note that all binary trees with an uneven number of grafted sub-trees (except the one having a single sub-tree) have a vanishing $\alpha_{(\tau_1,...,\tau_r)}$ coefficient and hence do not appear in the Magnus expansion.
In this formalism, $H_\tau$ is a left-ordered commutator having as argument the integral of each $H_{\tau_j}$, $j=1,2,\dots,r$, each having the same end point $\kappa$. Namely,
\begin{equation}\label{eq:H_tau}
	H_{\tau=(\tau_1,\tau_2,\dots, \tau_r)} =\left[\left[\cdots \left[H(\kappa), \int_0^\kappa H_{\tau_1} \right], \int_0^\kappa H_{\tau_2}\right],\dots, \int_0^\kappa H_{\tau_r}\right] ,
\end{equation}
which yields a recursive dependence.
This property is leveraged in the next section to compute the coefficients generated by the nested integrals in the Magnus expansion.

\subsection{The integral coefficient}\label{sec:integral_coefficient}

In this section, we analyse how the nested integration of a higher-order commutator can be formulated in terms of left-ordered sub-trees (\cref{fig_tree_rep}), in the same way as the $\alpha$ coefficients in \cref{eqn coeff recursive}.
This contribution is new, and will be a key element in deriving the main recursive formula pairing both coefficients.
Indeed, this result allows us to formulate a truncation bound that does not include integrals and thus arrive at a more explicit scaling behaviour.

A straightforward bound of the $N$-th Magnus in \cref{eqn_Magnus_tree} can be expressed as
\begin{equation}\label{eq:M_N_bound}
	\norm{M_n(t)}_\mathrm{op} \leq\sum_{\tau \in \mathcal{T}_{n}} \abs{\alpha_\tau } \norm{\int_{0}^t H_\tau (\kappa )d \kappa}_\mathrm{op}
	\leq
	 \sum_{\tau \in \mathcal{T}_{n}} \abs{\alpha_\tau}  \, 2^{n-1} (h_{\max} t)^n \, \mc I (\tau=\tau_1,\dots,\tau_r; t) \ ,
\end{equation}
where $\mc I (\tau = (\tau_1,\dots,\tau_r); t)$ denotes the crude nested integral contained in $\int_{0}^t H_\tau (\kappa )d \kappa$.
As a concrete example, the 7-leaf tree depicted in \cref{example_tree} with a 2-subtree structure, will have
\begin{equation}\label{eq:unfolded_integral}
	\mc I (\tau =(\tau_1 ,\tau_2); t) = \int_0^t d \kappa \int_0^{\kappa} dt_2 \int_0^{t_2} dt_3 \int_0^{\kappa} dt_4 \int_0^{t_4} dt_5 \int_0^{t_5} dt_6 \int_0^{t_4} dt_7 \ .
\end{equation}

Since $\mc I (\tau_1,\dots,\tau_r)$ is a nested integration solely of powers of the integration variable, this object results in an inverse of the product of natural numbers multiplying $t^n$. That is, we can write
\begin{equation}\label{eq:integral_to_mu}
	\mc I (\tau =(\tau_1,\dots,\tau_r)) \ = \mu (\tau =(\tau_1,\dots,\tau_r)) \, t^n
\end{equation}
for some $\mu (\tau_1,\dots,\tau_r)$ that we call the \emph{integral coefficient}.
We now show that,
\begin{lemma}[Integral coefficient recursion formula]\label{lemma:Integral_recursion}
	The \emph{integral coefficient} of a binary tree $\tau \in \mc T_n$ with the sub-trees structure $(\tau_1,\dots,\tau_r)$ is  obtained by
	\begin{equation}\label{eq:integral_recursion}
		\mu(\tau) = \frac{1}{n} \mu(\tau_1) \mu(\tau_2) \cdots \mu(\tau_r) \ .
	\end{equation}
\end{lemma}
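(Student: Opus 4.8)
The plan is to evaluate $\mathcal I(\tau;t)$ directly from its defining expression in \cref{eq:H_tau}, taking advantage of the fact that in the left-ordered normal form of \cref{fig_tree_rep} every grafted subtree $\tau_i$ is integrated against the \emph{same} outer variable $\kappa$. The first ingredient I would record is a scaling observation that underlies everything: for any tree $\sigma\in\mathcal T_m$ with a free upper limit $s$, the crude integral satisfies $\int_0^s H_\sigma = \mu(\sigma)\,s^{m}$. This follows from \cref{eq:integral_to_mu} together with homogeneity --- every lower limit is $0$ and the integrand is a pure product of powers of the integration variables, so the nested integral is homogeneous of degree $m$ in its top limit, and rescaling $t\mapsto s$ rescales it by $(s/t)^m$.

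Next I would unfold \cref{eq:H_tau} at the level of its integration skeleton, exactly as in the worked example \cref{eq:unfolded_integral}. The nested bracket $[\,\cdots[H(\kappa),\int_0^\kappa H_{\tau_1}],\dots,\int_0^\kappa H_{\tau_r}]$ contributes a single outer integration $\int_0^t d\kappa$ together with $r$ inner blocks, one per subtree $\tau_i$, all sharing the common upper limit $\kappa$. Since the integration variables belonging to distinct subtrees are disjoint, Fubini's theorem lets these blocks factor, yielding
\[
  \mathcal I(\tau;t)=\int_0^t \prod_{i=1}^r\Big(\int_0^\kappa H_{\tau_i}\Big)\,d\kappa .
\]
Substituting the scaling identity into each factor, with $n_i$ the number of leaves of $\tau_i$, gives $\int_0^\kappa H_{\tau_i}=\mu(\tau_i)\,\kappa^{n_i}$, and hence
\[
  \mathcal I(\tau;t)=\Big(\prod_{i=1}^r\mu(\tau_i)\Big)\int_0^t \kappa^{\,n_1+\dots+n_r}\,d\kappa .
\]
Because the left-ordered decomposition accounts for all leaves except the leftmost one, $n_1+\dots+n_r=n-1$, so the remaining elementary integral is $\int_0^t\kappa^{n-1}\,d\kappa=t^n/n$. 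Comparing the resulting expression $\tfrac1n\big(\prod_i\mu(\tau_i)\big)t^n$ with the defining relation $\mathcal I(\tau;t)=\mu(\tau)\,t^n$ from \cref{eq:integral_to_mu} yields \cref{eq:integral_recursion}. The degenerate case $n=1$ (i.e.\ $r=0$) is consistent: the empty product equals $1$ and $\int_0^t d\kappa=t$ gives $\mu=1$.

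The step carrying all the content --- and the one I would take most care over --- is the factorization. One must argue that the crude integral of the $r$-fold nested commutator collapses to the product of the $r$ independent subtree integrals anchored at the shared variable $\kappa$. This is precisely where the left-ordered normal form of \cref{fig_tree_rep,eq:H_tau} does the work: each bracket attaches its right argument at the same node, so the associated integration domains overlap only in $\kappa$ and are otherwise mutually independent, which is exactly the hypothesis Fubini needs. Everything else --- the homogeneity rescaling, the leaf-count identity $\sum_i n_i=n-1$, and the single power integral producing the factor $1/n$ --- is routine bookkeeping once the factorization is in place.
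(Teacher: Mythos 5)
Your proposal is correct and matches the paper's own proof essentially step for step: both exploit the left-ordered decomposition of \cref{eq:H_tau} to factor the crude integral as $\int_0^t \prod_i \mathcal I(\tau_i;\kappa)\,d\kappa$, use homogeneity $\mathcal I(\tau_i;\kappa)=\mu(\tau_i)\kappa^{j_i}$ with $\sum_i j_i = n-1$, and evaluate $\int_0^t \kappa^{n-1}d\kappa = t^n/n$. Your explicit appeal to Fubini and the check of the degenerate case $n=1$ merely make explicit what the paper leaves implicit.
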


\begin{proof}
	From \cref{eq:H_tau}, we note that the $r$ different integrals for $\tau_1,\tau_2,\dots, \tau_r$ are (i)~all pairwise independent with respect to each others, and (ii)~all have the same end point of the integration interval -- denoted by~$\kappa$ -- which precisely corresponds to the variable of the root integral.
	From these observations, we can write the nested integral as a single integral over the product of the integrals of the sub-trees, depending on $\kappa$:
	\begin{subequations}
	\begin{align}
		\mc I (\tau=(\tau_1,\dots,\tau_r); t) \
		&=
		\int_{0}^t d \kappa \, \mc I (\tau_1;\kappa) \cdot \mc I (\tau_2;\kappa) \cdots \mc I (\tau_r;\kappa)\\
		&=
		\int_{0}^t d \kappa \,  \mu(\tau_1) \kappa^{j_1} \mu(\tau_2) \kappa^{j_2} \cdots \mu(\tau_r)  \kappa^{j_r} d\kappa \\
		&=
		\mu(\tau_1)\mu(\tau_2)  \cdots \mu(\tau_r)  \int_{0}^t  \kappa^{N-1} d\kappa ,
	\end{align}
	\end{subequations}
	where we denoted by $j_\ell$ the number of leaves of the sub-tree $\tau_\ell$ such that $j_1+j_2+\dots +j_r = n-1$. Carrying out the last integral for $\kappa$ yields the result.
\end{proof}

\begin{figure}
	\begin{center}
		\begin{tikzpicture}[scale=0.45]

			\fill  (1.25,0) circle (6pt);
			\draw[thick](1.25,0) -- (6.5,2);
			\draw[thick](1.25,0) -- (-4,2);

			\begin{scope}[shift={(2.5,0)}]
				\draw[thick] (4,2) -- (4,3.5);
				\draw[thick] (4,3.5) -- (6,5.25);
				\draw[thick] (6, 5.25) -- (6, 6.75);

				\draw[thick] (4,3.5) -- (0.5,7);
				\draw[thick] (2.25, 5.25) -- (4,7);
				\draw[thick] (4,7) -- (4,8.5);
				\draw[thick] (4,8.5)-- (5.5,10);
				\draw[thick] (4,8.5)-- (2.5,10);
				\draw[thick] (5.5,10) -- (5.5,11.25);

				\fill  (4,2) circle (6pt);
				\fill  (4,3.5) circle (6pt);
				\fill (2.25, 5.25) circle (6pt);
				\fill (6, 5.25) circle (6pt);
				\fill (0.5,7) circle (6pt);
				\fill (4,7) circle (6pt);
				\fill (6, 6.75) circle (6pt);
				\fill (4,8.5) circle (6pt);
				\fill (2.5,10) circle (6pt);
				\fill (5.5,10) circle (6pt);
				\fill (5.5,11.25) circle (6pt);
			\end{scope}
			--------------------------

			\draw[thick] (-4,2) -- (-1.5,3.25);
			\draw[thick] (-4,2) -- (-6.5,3.25);
			\draw[thick]  (-1.5,3.25) -- (-1.5,4.75);
			\draw[thick] (-1.5,4.75) -- (0.25, 6.0);
			\draw[thick] (0.25, 6.0) -- (0.25, 7.5);
			\draw[thick] (-1.5,4.75) -- (-3.25, 6.0);

			\fill (-4,2) circle (6pt);
			\fill (-1.5,3.25) circle (6pt);
			\fill (-6.5,3.25) circle (6pt);
			\fill (-1.5,4.75) circle (6pt);
			\fill (0.25,6.0) circle (6pt);
			\fill (0.25,7.5) circle (6pt);
			\fill (-3.25,6.0) circle (6pt);

			\fill (1.25,-1.5) circle (6pt);
			\draw[thick] (1.25,-1.5) -- (1.25,0);

		\end{tikzpicture}
	\end{center}
	\caption{A 7-leaves binary tree.}\label{example_tree}
\end{figure}

\begin{example}
	As a example, consider the tree with 7 leaves from \cref{example_tree}. We shall first translate it into the integral
	\begin{adjustwidth}{-2cm}{-2cm}
	{\small
	\begin{equation}
		\int_{0}^t H_\tau (\kappa )d \kappa  = \int_0^t d \kappa
		\Big[ \big[h(\kappa), \int_0^{\kappa} dt_2 [h(t_2), \int_0^{t_2}  dt_3 h(t_3)] \big],\int_0^{\kappa} dt_4\big[ [h(t_4), \int_0^{t_4}dt_5 [h(t_5), \int_0^{t_5} dt_6 h(t_6)]], \int_0^{t_4} dt_7 h(t_7)\big] \Big] \, 
	\end{equation} }
	\end{adjustwidth}
	that we bound with a crude nested integral
	\begin{subequations}
		\begin{align}
		\norm{\int_{0}^t H_\tau (\kappa )d \kappa}_\mathrm{op}
		&\leq
		2^{n-1} h_{\max}^n  \, \mc I (\tau=(\tau_1,\dots,\tau_r); t) \\
		&=
		2^{n-1} h_{\max}^n \int_0^t d \kappa \int_0^{\kappa} dt_2 \int_0^{t_2} dt_3 \int_0^{\kappa} dt_4 \int_0^{t_4} dt_5 \int_0^{t_5} dt_6 \int_0^{t_4} dt_7 \\
		&=
		2^{n-1} h_{\max}^n \int_0^t d \kappa \, \frac{\kappa^2}{2}  \frac{\kappa^2}{4\cdot 2} \\
		&=
		2^{n-1} h_{\max}^n \frac{1}{7\cdot 4 \cdot 2 \cdot 2} t^7 . \label{eq:example_coeff}
 		\end{align}
		\end{subequations}
 		Using the recursive formula, we have instead
 		\begin{equation}
 			\mu(\tau) = \frac{1}{7} \mu(\tau_1) \mu(\tau_2),
 		\end{equation}
 		where again we can decompose into subtrees $\tau_1=(\omega_0, \omega_1)$ and $\tau_2=\omega_1$. Again by recursion $\mu(\tau_1)= \frac{1}{4}  \frac{1}{2}$, and $\mu(\tau_2)=\mu(\omega_1)=\frac{1}{2}$. We thus obtain $\mu(\tau) = \frac{1}{7} \frac{1}{4}  \frac{1}{2} \frac{1}{2}$ which matches the coefficient computed by direct calculation in \cref{eq:example_coeff}.
\end{example}

\vskip40pt

\section{A recursion formula for the combined coefficients}\label{sec:combined_recursion}

Since both the $\alpha$ coefficient and the integral coefficient~$\mu$ follow the same left-ordered tree description, for any given tree we can straightforwardly couple them together and obtain another recursion formula.
Our bound on the Magnus term in \cref{eq:M_N_bound} groups together all commutators of depth~$n$ (corresponding to the set of trees $\mc{T}_n$ with $n$ leaves), hence the quantity of interest is
$\nu_n \coloneqq \sum_{\tau \in \mc{T}_n} \abs{\alpha_\tau} \cdot \mu(\tau)$.
We recall that an $(n,r)$-\emph{composition} is a sequence of $r$ positive integers summing to~$n$, and that two different orderings of the same numbers count as two distinct compositions. Then, adhering to the unique representation in \cref{fig_tree_rep}, the recursion formula for $\nu_{n+1}$ can be characterised in terms of all $(n,r)$-compositions of $n$~leaves of the trees in $\mc T_{n+1}$ (that is, excluding the leftmost leaf) split into $r$ subtrees, for all $r=1,\dots, n$.

\begin{theorem}\label{thm:complete_recursion_fromula}
	Define the \emph{tree coefficients} $\nu_n \coloneqq \sum_{\tau \in \mc{T}_n} \abs{\alpha_\tau} \cdot \mu(\tau)$ and let $\nu_0=0$.
	Then their recursion formula is given by
	\begin{equation}\label{eq:beautiful_recursion}
		(n+1) \cdot  \nu_{n+1} =  \sum_{r=1}^{n} \frac{\abs{B_r}}{r!} \sum_{\substack{j_1,\dots,j_r \\ \mathrm{composition}(n,r)}}
		\prod_{i=1}^{r} \, \nu_{j_i} \ .
	\end{equation}
\end{theorem}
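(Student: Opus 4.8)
The plan is to build the claimed recursion directly from the two multiplicative recursions already established, namely the coefficient recursion \cref{eqn coeff recursive} for $\alpha_\tau$ and the integral-coefficient recursion \cref{eq:integral_recursion} for $\mu(\tau)$, both of which are governed by the same left-ordered subtree decomposition $\tau = (\tau_1,\dots,\tau_r)$ of \cref{fig_tree_rep}. The central observation is that since both $\abs{\alpha_\tau}$ and $\mu(\tau)$ factorize over the grafted subtrees, so does their product $\abs{\alpha_\tau}\,\mu(\tau)$, up to the two scalar prefactors $\abs{B_r}/r!$ and $1/(n+1)$. Summing this factorized expression over all trees $\tau\in\mc T_{n+1}$ and reorganizing the sum according to the subtree structure will collapse it into products of the lower-order tree coefficients $\nu_{j_i}$.

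First I would fix a tree $\tau\in\mc T_{n+1}$ with left-ordered decomposition $(\tau_1,\dots,\tau_r)$, where $\tau_i\in\mc T_{j_i}$ and the $n$ non-spine leaves are distributed as $j_1+\dots+j_r = n$. Taking absolute values in \cref{eqn coeff recursive} gives $\abs{\alpha_\tau} = (\abs{B_r}/r!)\prod_{i=1}^r \abs{\alpha_{\tau_i}}$, using that $\abs{B_r^+}=\abs{B_r}$ for every $r$ (the only modified value, $B_1^+=1/2$, has the same magnitude as $B_1$). Applying \cref{eq:integral_recursion} with $\tau\in\mc T_{n+1}$ gives $\mu(\tau) = (1/(n+1))\prod_{i=1}^r \mu(\tau_i)$. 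Multiplying these yields
\begin{equation}
  \abs{\alpha_\tau}\,\mu(\tau) = \frac{1}{n+1}\,\frac{\abs{B_r}}{r!}\prod_{i=1}^r \abs{\alpha_{\tau_i}}\,\mu(\tau_i).
\end{equation}

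Next I would sum over $\tau\in\mc T_{n+1}$ and exploit the fact that the left-ordered decomposition is a bijection: every tree in $\mc T_{n+1}$ corresponds to exactly one choice of $r$, one $(n,r)$-composition $(j_1,\dots,j_r)$, and one independent selection $\tau_i\in\mc T_{j_i}$ for each $i$. Regrouping the sum accordingly and pulling the independent subtree sums apart factorizes the inner multiple sum,
\begin{equation}
  \sum_{\tau_1\in\mc T_{j_1}}\cdots\sum_{\tau_r\in\mc T_{j_r}}\prod_{i=1}^r \abs{\alpha_{\tau_i}}\,\mu(\tau_i) = \prod_{i=1}^r\Bigl(\sum_{\tau_i\in\mc T_{j_i}}\abs{\alpha_{\tau_i}}\,\mu(\tau_i)\Bigr) = \prod_{i=1}^r \nu_{j_i},
\end{equation}
where the final equality is just the definition of $\nu_{j_i}$. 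Since each subtree carries at least one leaf, $r$ ranges over $1,\dots,n$, which gives exactly the claimed recursion after multiplying through by $(n+1)$.

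I expect the only real obstacle to be the combinatorial bookkeeping in the reorganization step: verifying that the left-ordered representation genuinely furnishes a bijection onto composition-indexed tuples of subtrees, so that no tree is double-counted and none is omitted, and that the prefactors $\abs{B_r}/r!$ and $1/(n+1)$ depend only on $r$ and on $n$ and hence pass through the factorization untouched. I would also note that the recursion is to be read for $n\geq 1$, with the base value $\nu_1=1$ (from the single-leaf tree, where $\alpha=\mu=1$) supplied separately; the convention $\nu_0=0$ plays no role in the derivation itself, since every part of a composition is strictly positive, but is retained for the generating-function analysis of \cref{sec:analysis_recursion}.
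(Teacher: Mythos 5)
Your proposal is correct and takes essentially the same route as the paper's proof: both expand $\abs{\alpha_\tau}\,\mu(\tau)$ via the two subtree recursions over the left-ordered decomposition, regroup the sum over all trees by $(n,r)$-compositions, and factorize the independent subtree sums into $\prod_{i=1}^r \nu_{j_i}$ -- the only cosmetic difference being that the paper derives the identity at index $n$ and relabels $n \to n+1$, while you work at $n+1$ directly. Your explicit checks that $\abs{B_r^+} = \abs{B_r}$ (so taking absolute values commutes with the recursion) and that the left-ordered representation is a bijection onto composition-indexed tuples are points the paper leaves implicit, and are handled correctly.
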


\begin{proof}

	 Using \cref{eqn coeff recursive} and \cref{lemma:Integral_recursion}  we have
	 \begin{subequations}
	\begin{align}
		\nu_n
		&=
		\sum_{\tau \in \mc T_n} \abs{\alpha_\tau} \cdot \mu(\tau)
		=
		\sum_{\tau=(\tau_1,\dots,\tau_r) \in \mc T_n} \frac{B_{r}^+}{r!} \prod_{i=1}^r \alpha_{\tau_i} \cdot \frac 1 n 	\prod_{i=1}^r \mu(\tau_i) \\
		&=
		\sum_{r=1}^{n-1} \frac{\abs{B_r}}{n \cdot r!} \sum_{\substack{j_1,\dots,j_r \\ \mathrm{composition}(N-1,r)}} \sum_{\tau_{j_1} \in \mc T_{j_1}, \dots,\tau_{j_r} \in \mc T_{j_r}} \alpha_{\tau_{j_1}} \cdot \mu(\tau_{j_1}) \cdots  \alpha_{\tau_{j_r}} \cdot \mu(\tau_{j_r}) \\
		&=
		\sum_{r=1}^{n-1} \frac{\abs{B_r}}{n \cdot r!} \sum_{\substack{j_1,\dots,j_r \\ \mathrm{composition}(N-1,r)}}
		\Big( \sum_{\tau_1 \in \mc{T}_{j_1}} \abs{\alpha_{\tau_1}} \cdot \mu(\tau_1) \Big)  \cdots \Big( \sum_{\tau_r \in \mc{T}_{j_r}} \abs{\alpha_{\tau_r}} \cdot \mu(\tau_r)  \Big) \\
		&=
		\sum_{r=1}^{n-1} \frac{\abs{B_r}}{n \cdot r!} \sum_{\substack{j_1,\dots,j_r \\ \mathrm{composition}(n-1,r)}}
		\prod_{i=1}^{r} \, \nu_{j_i} \ . \label{eq:final_eq_recursion}
	\end{align}
	\end{subequations}
	Note that the order of the division of $(n-1)$ matters, and hence compositions instead of partitions are considered.
	Relabelling \cref{eq:final_eq_recursion} by $n \rightarrow n+1$ concludes the proof.

\end{proof}
\subsection{Analytical solution of the recursion with generating formula}\label{sec:analysis_recursion}

By leveraging the recursion formula of \cref{thm:complete_recursion_fromula}, we can derive the scaling for~$\nu_n$.
\begin{lemma}[Scaling of the tree coefficients]\label{lemma:scaling_of_coefficients}
	For the set of binary trees $\mc T_n$ with $n$ leaves, we have
	\begin{equation}
		\nu_n  =  o (n^{-2} \, 2^{-n}) \ . 
	\end{equation}
\end{lemma}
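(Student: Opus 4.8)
The plan is to turn the recursion of \cref{thm:complete_recursion_fromula} into a differential equation for the ordinary generating function of the $\nu_n$, locate its dominant singularity, and read off the growth rate of the coefficients by a root-test argument. First I would set $f(x)=\sum_{n\ge1}\nu_n x^n$; since all $\nu_n\ge0$ and $\nu_0=0$, this is a power series with non-negative coefficients and $f(0)=0$, $f'(0)=\nu_1=1$. The inner composition sum in \cref{eq:beautiful_recursion} is precisely the coefficient extraction $\sum_{\mathrm{comp}(n,r)}\prod_i\nu_{j_i}=[x^n]f(x)^r$ (each factor contributes a positive part because $\nu_0=0$), while the left-hand side $(n+1)\nu_{n+1}$ is the coefficient of $x^n$ in $f'(x)$. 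Summing the recursion over $n\ge1$, extending the inner sum to all $r\ge1$ (the extra terms vanish since $[x^n]f^r=0$ for $r>n$), and restoring the base term $\nu_1=1$, I obtain
\[
  f'(x)=1+\sum_{r\ge1}\frac{|B_r|}{r!}\,f(x)^r .
\]
The Bernoulli sum is closed-form: with $B_r=0$ for odd $r\ge3$, $|B_1|=\tfrac12$, and $\sum_{k\ge1}\frac{|B_{2k}|}{(2k)!}y^{2k}=1-\frac{y}{2}\cot\frac{y}{2}$, this collapses to the autonomous, separable equation
\[
  f'(x)=g\bigl(f(x)\bigr),\qquad g(\phi)\coloneqq 2+\frac{\phi}{2}\Bigl(1-\cot\frac{\phi}{2}\Bigr),\qquad f(0)=0 .
\]

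Next I would analyse the singularity. Separating variables gives $x=\int_0^{f(x)}\frac{d\phi}{g(\phi)}$. One checks $g>0$ on $(0,2\pi)$ (e.g.\ $g(0^+)=1$, $g(\pi/2)=2$), so $f$ increases monotonically from $0$; and as $\phi\to2\pi^-$ the cotangent pole forces $g(\phi)\sim\frac{2\pi}{2\pi-\phi}\to+\infty$. Hence $f$ reaches the value $2\pi$ at the finite abscissa $\rho\coloneqq\int_0^{2\pi}\frac{d\phi}{g(\phi)}$, which by Pringsheim's theorem (non-negative coefficients) is the dominant singularity, i.e.\ the radius of convergence of $f$. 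A local expansion using $g(\phi)\sim\frac{2\pi}{2\pi-\phi}$ yields $\rho-x\sim\frac{(2\pi-f)^2}{4\pi}$, so $f(x)=2\pi-2\sqrt{\pi}\,\sqrt{\rho-x}\,(1+o(1))$: a square-root branch point. This is exactly where an explicit handle on $f$ — the separated integral, cast through the Lambert $W$ function as in the proof outline — pins down $\rho$ and the singularity type; a transfer theorem would even give the sharp $\nu_n\sim C\,\rho^{-n}n^{-3/2}$, but that refinement is not needed for the present claim.

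The conclusion then follows from the value of $\rho$. Substituting $\phi=2\theta$ gives $\rho=2\int_0^{\pi}\frac{d\theta}{2+\theta(1-\cot\theta)}=2\xi$, twice the classical Magnus convergence radius $\xi\approx1.086869$ of \cref{convergence_radius}, so $\rho\approx2.1737>2$ and $\rho^{-1}=\delta_\xi/2<\tfrac12$. Since $\rho$ is the radius of convergence, $\limsup_n\nu_n^{1/n}=\rho^{-1}<\tfrac12$, hence for any fixed $r$ with $2<r<\rho$ we have $\nu_n\le r^{-n}$ for all large $n$, and therefore $\nu_n\,2^{n}n^{2}\le(2/r)^{n}n^{2}\to0$, which gives $\nu_n=o(2^{-n}n^{-2})$.

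\emph{The main obstacle} is the strict inequality $\rho>2$ — equivalently that the exponential rate of $\nu_n$ lies below $\tfrac12$. The defining integral $\int_0^{2\pi}d\phi/g(\phi)$ has no elementary closed form (hence the detour through the series / Lambert-$W$ representation of $f$), so one must either bound it rigorously from below by $2$ or identify it with $2\xi$ via the established convergence-radius analysis. This threshold is genuinely sharp: $\rho=2$ would be fatal, since the square-root singularity alone contributes only an $n^{-3/2}$ polynomial factor, which is too slow to beat $n^{-2}$. Everything else — the passage from the recursion to the ODE, and the final root-test estimate — is routine.
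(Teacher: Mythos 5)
Your proof is correct, and it takes a genuinely different route from the paper's. The paper also starts from the generating function and the cotangent resummation, but then truncates the non-elementary antiderivative at two terms ($2\log f - \tfrac{f}{3}$), inverts via the Lambert $W$-function, extracts $\nu_n$ from the resulting double series with Stirling's bound and a maximisation over $k$, and calibrates the constants $\beta,\delta$ numerically against the first 24 coefficients (the truncation is justified post hoc in \cref{sec:discussing_solution_integral}). You instead keep the ODE exact and run classical singularity analysis: positivity and monotonicity of $g$ (indeed $\theta\cot\theta\le 1$ gives $2+\theta-\theta\cot\theta\ge 1+\theta>0$ after $\phi=2\theta$), blow-up of $g$ at the cotangent pole $f=2\pi$, Pringsheim to place the dominant singularity at $\rho=\int_0^{2\pi}\d\phi/g(\phi)$ on the positive axis, and Cauchy--Hadamard to get $\limsup_n \nu_n^{1/n}=1/\rho$, so that $\rho>2$ kills $n^2\,2^n\,\nu_n$. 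Two remarks. First, your ODE $f'=2+\tfrac{f}{2}\bigl(1-\cot\tfrac{f}{2}\bigr)$ is actually the correct one: the paper's \cref{eq:differential_gen_f} omits the $+1$ forcing term coming from the base coefficient $\nu_1=1$ (as written, its right-hand side vanishes at $f=0$, so the literal initial-value problem with $f(0)=0$ admits only the trivial solution, which the paper circumvents by fitting free integration constants numerically); your careful treatment of the $n=0$ coefficient fixes this and makes the separated integral proper, with no logarithmic term. Second, your single remaining numerical input, $\rho>2$, is on the same evidentiary footing as the paper's numerically determined $\delta\approx 0.9024<1$, and is easily certifiable: the integrand $2/(2+\theta-\theta\cot\theta)$ is positive and decreasing on $(0,\pi)$ because $\theta\cot\theta$ is decreasing there, so right-endpoint Riemann sums give rigorous lower bounds converging to $\rho=2\xi\approx 2.1737$; your identification $\rho=2\xi$ (the factor $2$ being precisely the $2^{n-1}$ that the paper separates out of $\nu_n$) moreover explains why the paper's fitted radius $1/\delta\approx 1.108$ lands so close to $\xi$ of \cref{convergence_radius}. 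As for what each approach buys: yours is shorter, bypasses the uncontrolled truncation step entirely, and your transfer-theorem remark $\nu_n\sim C\,\rho^{-n}n^{-3/2}$ makes transparent that the lemma's $n^{-2}$ is purchased from the strict exponential margin $\rho>2$ --- exactly the trade the paper makes implicitly; the paper's route, in exchange, yields the explicit series representation \cref{nu_formula} for $\nu_n$ and the concrete constants reused in \cref{lemma:bound_magnus_term}. One point you should make explicit for completeness: that the formal power series solving the recursion agrees with the analytic solution of the ODE (i.e.\ has positive radius of convergence) follows from the standard majorant method for analytic ODEs, after which your Pringsheim step is airtight.
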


\begin{proof}
Let us consider the generating function $f= \sum_{n=1}^\infty \nu_n x^n$  embedding the tree coefficients~$\nu_n$.
We observe that both sides of \cref{eq:beautiful_recursion} in \cref{thm:complete_recursion_fromula} correspond to the coefficient of the monomial $x^n$ in two expressions of the generating function, namely,
\begin{equation}\label{eq:relation_gen_formula}
	[x^n] \frac{\d}{\d x} f = [x^n] \sum_{r=1}^\infty \frac{\abs{B_r}}{r!} f^r \ .
\end{equation}
Here assumption that $\nu_0=0$ is used.
We can further develop the RHS of \cref{eq:relation_gen_formula} by noting its connection to the cotangent function. Namely,
\begin{equation}
	\sum_{r=1}^\infty \frac{\abs{B_r}}{r!} f^r = \frac f 2 - \frac f 2 \cot (\frac f 2 ) + 1
\end{equation}
yielding the relation
\begin{equation}\label{eq:differential_gen_f}
	\frac{\d f}{ \d x}= \frac f 2 - \frac f 2 \cot (\frac f 2 ) +1,
\end{equation}
from which we obtain the separated integral equation
\begin{equation} \label{eq:differential_integral}
		\int \frac{\d f}{ \frac f 2 - \frac f 2 \cot (\frac f 2 ) +1}= \int \d x \, .
\end{equation}
Since  the LHS \cref{eq:differential_integral} has no elementary antiderivative (cfr.~\cite{Yizhen2025} and the proof in \hyperref[app:no_antiderivative]{Appendix~B}),
we expand it as a series around $f=0$:
\begin{equation}\label{eq:gen_f_series}
	\int \frac{\d f}{ \frac f 2 - \frac f 2 \cot (\frac f 2 ) +1}=  2 \log(f) - \frac f 3 + \frac{f^2}{36} - \frac{2 f^3}{405} + \frac{11 f^4}{12969} - \frac{29 f^5}{170100} + \mc O(f^6) \ .
\end{equation}

Integrating the RHS of \cref{eq:differential_gen_f} will give $x + \mathrm{const}$.
From the form of~$f(x)$ we observe that only the first two terms of the RHS of \cref{eq:gen_f_series} contain terms of this order, and all other polynomial terms of higher order produced by these two first terms are cancelled out by the remaining terms of the series. Hence, for the purpose of obtaining a scaling of the coefficients $\nu_n$ in $f(x)$, considering the first two terms will be sufficient.
(We discuss this truncation further after this proof, in \cref{sec:discussing_solution_integral}.)

The solution of the simplified integral is then
\begin{equation}\label{eq:approx_solution_diff_eq}
	2\log f - \frac {f}{3} = x + C,
\end{equation}
for some integration constant $C$.
This is a transcendental equation that we can solve by means of the Lambert $W$-function~\cite{Corless1996}.
Dividing by 2 and then exponentiating both sides and multiplying by $-1/6$ gives
\begin{equation}\label{eq:exp_formula}
	- \frac {f}{6} \e^{-\frac {f}{6}} = - A \e^{\frac x 2}
\end{equation}
for some constant $A>0$. We can now apply the Lambert $W$-function to obtain
\begin{subequations}
\begin{align}
	f&= -6 W(- A \e^{\frac x 2}) \label{eq:Lambert_W} \\
	&= -6 \sum_{k \geq 1} \frac{(-k)^{k-1}}{k!} (-1)^k A^k \e^{kx/2} \\
	&= 6 \sum_{k \geq 1} \frac{k^{k-1}}{k!} \frac{\e^{kx/2}}{\beta^k}
	= 6 \sum_{k \geq 1} \frac{k^{k-1}}{k!} \frac{1}{\beta^k} \sum_{n\geq 0} \frac{k^n}{2^n \, n!}x^n \ ,\label{eq:f_gen_series}
\end{align}
\end{subequations}
where we substituted $A=1/\beta$.
Then we have a direct expression for the tree coefficients:
\begin{equation}\label{nu_formula}
	\nu_n =  [x^n] f = \frac{6}{2^n \, n!} \sum_{k\geq 1} \frac{k^{k+n-1}}{k!} \frac{1}{\beta^k} \ .
\end{equation}

\noindent We note that this infinite sum is independent of $n$, thus for scaling behaviour we shall analyse the terms in the $k$-sum for $\nu_n$ from \cref{nu_formula} separately using a Stirling approximation formula for the factorial, namely,
\begin{equation}
	\sqrt{2 \pi m} \left(\frac m \e \right)^m <m! \ .
\end{equation}
Then, the $k$-th term in the sum (with prefactor) is upper bounded by
\begin{equation}\label{def:phink}
	 \frac{6}{2^n \, n!} \frac{k^{k+n-1}}{k!} \frac{1}{\beta^k}
	 <
	  \frac{6}{2\pi} \left(\frac{\e}{2}\right)^n \frac{1}{n^{n+1/2}} \, k^{n-3/2} \left( \frac \e \beta \right)^k
	  \eqqcolon \varphi (n,k) .
\end{equation}
For a given $n$, the maximum value\footnote{As a side note, we observe that the values are peaked about this value, as shown in \cref{fig:max_k}.} of $\varphi(n,k)$ is achieved for
\begin{equation}\label{eq:k_max}
	k_{\max} = \frac{n-3/2}{\vartheta} \qquad \text{where} \quad \vartheta = \ln \beta -1.
\end{equation}

\begin{figure}[tb]
	\begin{flushleft}
		\includegraphics[scale=0.52]{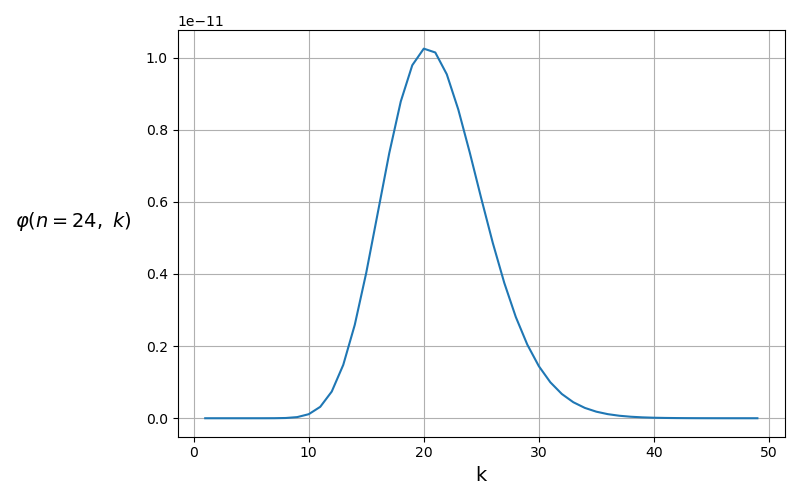}
	\end{flushleft}
	\caption{The upper bound on $\varphi(n,k)$ from \cref{def:phink} (for $n=24$) showing a concentrated pick around the value $k_{\max}$ in \cref{eq:k_max}.}\label{fig:max_k}
\end{figure}
Then, for every positive integer $k$,
\begin{subequations}
\begin{align}
	\varphi (n,k) \leq
	\varphi (n,k_{\max}) &=
	 \frac{6}{2\pi} \left(\frac{\e}{2}\right)^n \frac{1}{n^{n+1/2}} \, \left( \frac{n-3/2}{\vartheta}\right)^{n-3/2} \left( \frac \e \beta \right)^{ \frac{n-3/2}{\vartheta}} \\
	 &\leq
	 \left(\frac \beta \e \right)^{\frac{3}{2\vartheta}} \vartheta^{3/2} \left[\frac{\e^{1+1/\vartheta}}{\beta^{1/\vartheta} \vartheta}\right]^n \, \frac{1}{n^2 \, 2^n} \ . \label{eq:scaling_bound}
\end{align}
\end{subequations}
Writing
\begin{equation}\label{def:delta}
	\delta\coloneqq \frac{\e^{1+1/\vartheta}}{\beta^{1/\vartheta} \vartheta}
\end{equation}
we conclude that $\varphi (n,k) = O  (\delta^n \, n^{-2} \, 2^{-n})$ for all~$k$ in the $k$-sum of $\nu_n$.

\medskip

 The constant $\beta$ and thus $\delta$ can be determined from \cref{nu_formula}.
We computed them for values of $n$ ranging from 10 to 24 (truncating the $k$-sum at $k=60$), obtaining values from $\beta \approx 8.233432$ to $\beta \approx 8.32685$ and a maximum value for $\delta$ (corresponding to the smallest convergence radius, computed at $n=24$) of
\begin{equation}\label{delta_value}
	\delta \approx  0.902362 < 1 \ ,
\end{equation}
so that $\nu_n = o (n^{-2} \, 2^{-n})$.
\end{proof}

\subsubsection{Approximation of the integral solution and convergence radius}\label{sec:discussing_solution_integral}

The truncation to the first two terms of the series in \cref{eq:gen_f_series} corresponds to a simplified version of \cref{eq:beautiful_recursion} for the first two terms in~$r$ only,
\begin{equation}\label{eq:beautiful_recursion_simplified}
	(n+1) \cdot  \nu_{n+1} =  \frac 1 2 \nu_n + \frac {1}{12} \sum_{j=1}^{n-1} \nu_j \nu_{n-j} \ .
\end{equation}
This indeed yields the integral solution given in \cref{eq:approx_solution_diff_eq}.
We observe that this approximation only influences the constants defining $\delta$ in \cref{def:delta}, but not the overall form of the scaling expressed in \cref{eq:scaling_bound}, and in particular the dependence on $n$.
To see this, we recall that the RHS of \cref{eq:approx_solution_diff_eq} has no powers in~$x$, implying that the non-linear terms of~$f= \sum_{n \geq 1} \nu_n x^n$ are exactly cancelled out by the total of the terms that are powers of~$f$, which only contain terms of order~$x^p, \ p \geq 2$.
Hence, we may write
\begin{adjustwidth}{-2cm}{-2cm}
\begin{equation}
	x + C
	=
	\int \frac{\d f}{ \frac f 2 - \frac f 2 \cot (\frac f 2 ) +1}
	=
	2 \log(f) - \frac f 3 + \frac{f^2}{36} - \frac{2 f^3}{405} + \frac{11 f^4}{12969} - \frac{29 f^5}{170100} + \dots
	=
	2 \log(f) - \frac f 3  + \frac 1 3 \sum_{n\geq 2} \nu_n x^n
\end{equation}
\end{adjustwidth}
and thus obtain an alternative version of \cref{eq:approx_solution_diff_eq},
\begin{equation}\label{eq:f_balanced}
	2 \log(f) - \frac f 3
	=
	C + x -\frac 1 3  \sum_{n\geq 2} \nu_n x^n \ .
\end{equation}
We assume the range of $f$ to be $0\leq x \leq M$ for some large $M$ and recall $\nu_n \geq 0$ for all~$n$.
We then arrive at a alternate version of \cref{eq:exp_formula}, that is,
\begin{equation}
	- \frac {1}{6}  f \e^{-\frac {1}{6} f} = - A \e^{\frac x 2} \e^{-\frac 1 6 \sum_{n\geq 2} \nu_n x^n} \ ,
\end{equation}
whose RHS is lower bounded by our approximation in \cref{eq:exp_formula} and can be upper bounded by $-A \e^{\frac x 2} \e^{- \gamma}$ for some ~$\gamma \equiv \gamma(M)>0$. Hence
\begin{equation}
	- A \e^{\frac x 2} \leq - A \e^{\frac x 2} \e^{-\frac 1 6 \sum_{n\geq 2} \nu_n x^n} \leq - A \e^{\frac x 2} \e^{-\gamma} \eqqcolon - \tilde A  \e^{\frac x 2}
\end{equation}
for $\tilde A < A$.
The solution for $f$ is then bounded on both sides by
\begin{equation}
	 -6 W(- A \e^{\frac x 2}) \geq f \geq - 6 W(- \tilde A \, \e^{\frac x 2}) \ ,
\end{equation}
where the LHS is the solution of our analysis in the proof of \cref{lemma:scaling_of_coefficients} with scaling coefficients as in \cref{nu_formula}, and the solution of the RHS will have the exact same scaling behaviour for its coefficients, up to the prefactor~$\tilde A$ which affects solely the global constant and $\beta$ (and thus $\delta$ in \cref{def:delta}).
This implies that the polynomial coefficients of the sandwiched function~$f$ must also obey the same scaling law in~$n$ \cite[eq.~(4) in Section~IV.1 and Theorem~IV.10 in Section~IV.5]{Flajolet2013}, but with a possibly different convergence radius~$1/\delta$.
In fact, with our choice of truncating the series in \cref{eq:gen_f_series} at the second term, the convergence radius implied by the value of~$\delta$ we computed in \cref{delta_value} is
$1/\delta \approx 1/0.902362 \approx 1.108203$. This is close but slightly larger than the convergence radius~$\xi$ of \cref{convergence_radius}.
To make a conservative choice and align with previous literature, we will adopt~$\xi$ for our main bounds, and show in the next section that the resulting scaling formula is an upper limit to the tree coefficients that we compute directly from the recursion in~\cref{eq:beautiful_recursion}.

\newpage
\subsection{Direct computation of the first coefficients}\label{sec:compute_trees}

Our explicit computation of the \emph{exact} coefficients~$\nu_n$ from the recursion formula \cref{eq:beautiful_recursion} is presented in \cref{table:tree_coefficients} for $n$ up to~10.
These coincide with the range reported by \cite{blanes09} and attributed to \cite{moan2002backward} (namely the first five coefficients), that they obtain from an expansion of an unevaluated integral formula.

\medskip

We plot the first inverted coefficients~$\nu^{-1}$ and observe that
\begin{remark} \label{rmk:analytic_and_computation}
	For all $1\leq n \leq 24$, direct computation gives
	\begin{equation}
		\nu_n \leq m(n) = 8 \cdot \delta_\xi^n \, n^{-2} \, 2^{-n}\ ,
	\end{equation}
	as illustrated in \cref{fig:tree_coefficients}.
\end{remark}

 \noindent The global constant~$8$ is chosen as the smallest even integer (we want to divide it by~2 later for the \cref{Main_Thm}) that yields an upper bound for all values~$n \leq 24$ -- including the smallest ones.
 We note in \cref{subfig:24_coefficients} that $m(n)^{-1}$ follows closely the growth of $\nu_n^{-1}$, corroborating the derivation in \cref{sec:analysis_recursion}.

\vspace{20pt}

\begin{table}[htbp]
	\centering
	\renewcommand{\arraystretch}{1.5}
		\captionsetup{skip=15pt}
	\resizebox{1.03\textwidth}{!}{%
	\begin{tabular}{|c|>{\centering\arraybackslash}m{3cm}|c|}
		\hline
		\textbf{Tree coefficient} & \textbf{Exact value} & \textbf{Decimal representation} \\ \hline
		$\nu_1$  & $1$ & $1.00000000 \times 10^{0}$ \\ \hline
		$\nu_2$  & $\tfrac{1}{4}$ & $2.50000000 \times 10^{-1}$ \\ \hline
		$\nu_3$  & $\tfrac{5}{72}$ & $6.94444444 \times 10^{-2}$ \\ \hline
		$\nu_4$  & $\tfrac{11}{576}$ & $1.90972222 \times 10^{-2}$ \\ \hline
		$\nu_5$  & $\tfrac{479}{86400}$ & $5.54398148 \times 10^{-3}$ \\ \hline
		$\nu_6$  & $\tfrac{1769}{1036800}$ & $1.70621142 \times 10^{-3}$ \\ \hline
		$\nu_7$  & $\tfrac{34091}{60963840}$ & $5.59200339 \times 10^{-4}$ \\ \hline
		$\nu_8$  & $\tfrac{943633}{4877107200}$ & $1.93482112 \times 10^{-4}$ \\ \hline
		$\nu_9$  & $\tfrac{92107357}{1316818944000}$ & $6.99468651 \times 10^{-5}$ \\ \hline
		$\nu_{10}$ & $\tfrac{688988827}{26336378880000}$ & $2.61611070 \times 10^{-5}$ \\ \hline
	\end{tabular}
	}
	\caption{Tree coefficients, exact values computed from \cref{thm:complete_recursion_fromula} (central column), and decimal representation rounded to the eighth decimal digit (right column) for readability.}\label{table:tree_coefficients}
\end{table}

\begin{figure}[htbp]
	\captionsetup{skip=-9pt}
	\begin{center}
		\begin{subfigure}{0.77\textwidth}
			\includegraphics[width=\linewidth]{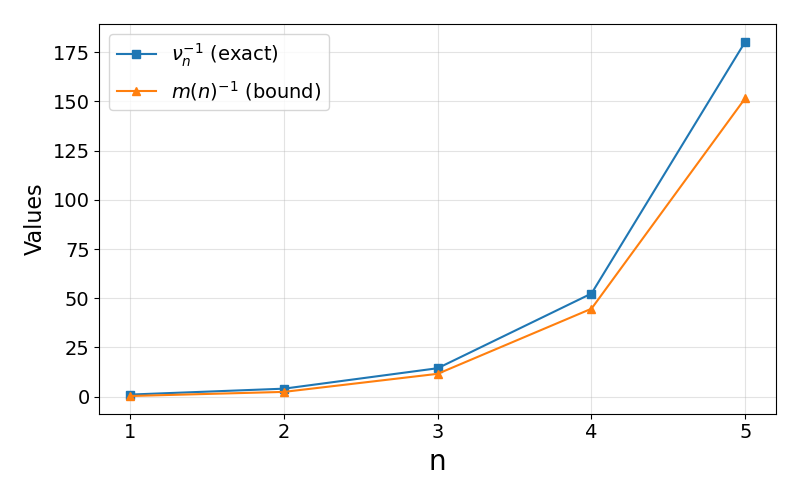}
			\caption{First 5 inverted tree coefficients.}
		\end{subfigure}
		\begin{subfigure}{0.77\textwidth}
			\includegraphics[width=\linewidth]{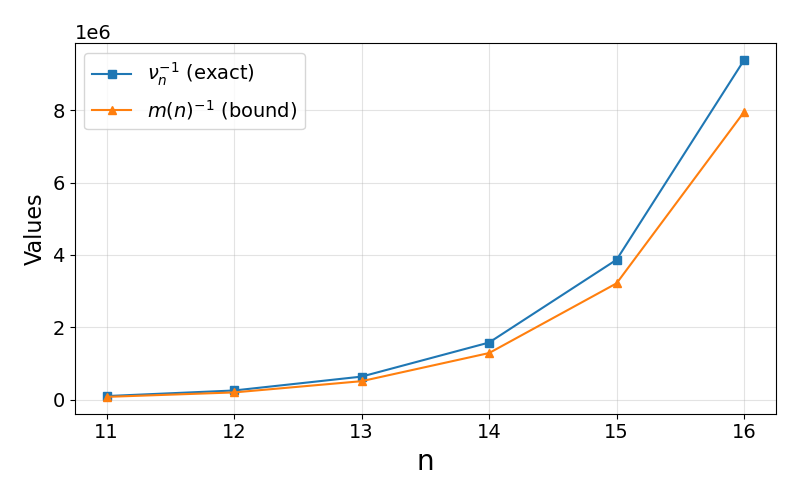}
			\caption{Inverted tree coefficients from 11 to 16.}
		\end{subfigure}
		\begin{subfigure}{0.77\textwidth}
			\includegraphics[width=\linewidth]{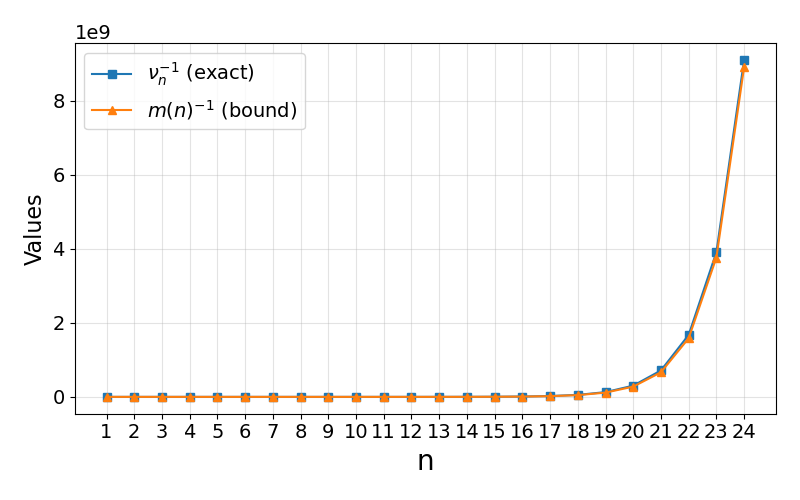}
			\caption{Inverted tree coefficients up to 24.}\label{subfig:24_coefficients}
		\end{subfigure}
	\end{center}
	\caption{Inverted tree coefficients $\nu_n^{-1}$ (blue) against the analytical bound (orange) obtained from inversion $m(n)^{-1} = \bigl(8\,\delta_\xi^{\,n}\,n^{-2}\,2^{-n}\bigr)^{-1}$, with $1 \leq n \leq 24$.
	Note that since we plot inverses, the original upper bound on $\nu_n$ becomes a lower bound.}\label{fig:tree_coefficients}
\end{figure}

\FloatBarrier

\section{Bounding arbitrary Magnus terms}\label{sect:bound}

We can now put together the analytical results for~$\nu_n$ from \cref{lemma:scaling_of_coefficients} with the constant obtained from \cref{sec:compute_trees} bounding the first 24 tree coefficients. We pick $\delta=\delta_\xi$ to be a conservative choice of this parameter compared to the values obtained in the proof of  \cref{lemma:scaling_of_coefficients}.

\begin{proof}[Proof of \cref{lemma:bound_magnus_term}]
	From \cref{eqn_Magnus_tree,eq:M_N_bound,eq:unfolded_integral,eq:integral_to_mu} and recalling our definitions for $\nu_n = \sum_{\tau \in \mc{T}_n} \abs{\alpha_\tau} \cdot \mu(\tau)$, we then have
	\begin{subequations}
	\begin{align}
		\norm{M_n(t)}_\mathrm{op} &=
		 \norm{\sum_{\tau \in \mc T_n} \alpha_\tau \int_{0}^t H_\tau (\kappa )d \kappa}_\mathrm{op} \\
		 &\leq
		 \sum_{\tau \in \mc T_n} \abs{\alpha_\tau} \cdot \norm{\int_{0}^t H_\tau (\kappa )d \kappa}_\mathrm{op}\\
		 &\leq
		 \sum_{\tau \in \mc T_n} 2^{n-1} (\, h_{\max}\cdot t)^n \, \abs{\alpha_\tau} \cdot \mu(\tau) \\
		 &\leq
		 2^{n-1} (\, h_{\max} \cdot t)^n \nu_n \ .
	\end{align}
	\end{subequations}
	By applying the scaling behaviour of $\nu_n$ from \cref{lemma:scaling_of_coefficients} with the choice of the constant $c=8$ and $\delta \equiv \delta_\xi$ we obtain an upper bound for all values~$n$.
\end{proof}
\noindent Note that the constant 4 has been chosen to be the smallest integer providing the upper bound, but this can be made tighter.

\medskip

From this per-term bound, we arrive directly at:

\begin{proof}[Proof of \cref{Main_Thm}]
  Our main result follows directly from applying \cref{lemma:bound_magnus_term} to all terms in the residuum of $\mathcal{M}(t) - \mathcal{M}^{(N)}(t) = \sum_{n\geq N+1} M_n (t)$. We set the denominator to $n^2 = (N+1)^2$ for all~$n \geq N+1$, allowing us to factor it out of the sum and use the closed form of the partial geometric series in powers of~$\delta_\xi \, h_{\max} t$.
\end{proof}

The bound can be further tightened by avoiding the uniform substitution of the denominator made in the above proof, in order to exploit the closed form of a geometric series.
We retain instead the larger denominator~$m^2$ inside the sum, and run over all~$m$ above the truncation term~$N$. This yields a sharper bound:

\begin{corollary}[Tighter truncation bound of the Magnus expansion]\label{cor:stronger_bound}
	Using the notation of \cref{Main_Thm},
	\begin{equation}
		\norm{\mathcal{M}(t) - \mathcal{M}^{(N)}(t)}_\mathrm{op}
		\leq
		4 \sum_{m \geq N+1} \frac{(\delta_\xi \, h_{\max} \cdot t)^m}{m^2} \ ,
	\end{equation}
	which converges for all $N$ if $h_{\max} \cdot t \leq \xi$.
\end{corollary}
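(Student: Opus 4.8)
The plan is to replay the opening of the proof of \cref{Main_Thm} verbatim, but to halt before the step in which sharpness is sacrificed for a closed form. First I would express the truncation error as the tail of the Magnus series and apply the triangle inequality for the operator norm,
\begin{equation}
	\norm{\mathcal{M}(t) - \mathcal{M}^{(N)}(t)}_\mathrm{op}
	=
	\norm{\sum_{m \geq N+1} M_m(t)}_\mathrm{op}
	\leq
	\sum_{m \geq N+1} \norm{M_m(t)}_\mathrm{op} \ .
\end{equation}

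Next I would insert the per-term bound supplied by \cref{lemma:bound_magnus_term}, namely $\norm{M_m(t)}_\mathrm{op} \leq 4\,(\delta_\xi\, h_{\max}\cdot t)^m/m^2$, valid for every $m \geq N+1$. Substituting this directly into the tail sum gives
\begin{equation}
	\norm{\mathcal{M}(t) - \mathcal{M}^{(N)}(t)}_\mathrm{op}
	\leq
	4 \sum_{m \geq N+1} \frac{(\delta_\xi\, h_{\max}\cdot t)^m}{m^2} \ ,
\end{equation}
which is precisely the claimed inequality. The sole departure from \cref{Main_Thm} is that I would deliberately refrain from replacing the running denominator $m^2$ by the constant $(N+1)^2$; retaining the genuine $m^{-2}$ weight term-by-term is exactly what tightens the estimate, at the price of no longer collapsing to an elementary geometric series.

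For convergence I would set $x \coloneqq \delta_\xi\, h_{\max}\cdot t = h_{\max}\cdot t/\xi$, so that the right-hand side is the dilogarithm tail $4\sum_{m\geq N+1} x^m/m^2$ with $x \geq 0$. The hypothesis $h_{\max}\cdot t \leq \xi$ is equivalent to $x \leq 1$, and comparison with the convergent series $\sum_m m^{-2} = \mathrm{Li}_2(1) = \pi^2/6$ shows the sum converges for every $N$, including the boundary $x = 1$. This is the real payoff over \cref{Main_Thm}, whose geometric closed form forced the strict condition $\delta_\xi\, h_{\max}\cdot t < 1$. I expect no genuine obstacle: the argument is a direct term-by-term application of the lemma, and the only point warranting a moment's care is observing that keeping $m^{-2}$ inside the sum upgrades the convergence region to include the endpoint $h_{\max}\cdot t = \xi$.
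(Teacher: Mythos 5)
Your proposal is correct and follows essentially the same route as the paper: the corollary is obtained by applying \cref{lemma:bound_magnus_term} term-by-term to the tail $\sum_{m \geq N+1} M_m(t)$ while retaining the genuine $m^{-2}$ denominator rather than uniformly replacing it by $(N+1)^{-2}$. Your comparison with $\sum_m m^{-2} = \pi^2/6$ correctly justifies the convergence claim including the endpoint $h_{\max}\cdot t = \xi$ (equivalently $\delta_\xi\, h_{\max}\cdot t = 1$), which is precisely the advantage over the strict inequality required in \cref{Main_Thm}.
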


\section{Discussion}

In this note, we obtain a new bound on truncations of the Magnus expansion used to compute approximations of time-dependent unitary evolutions.
Our results improve previous literature by an additional factor $(N+1)^{-2}$ that tightens the truncation error, when no additional assumption on the Hamiltonian are made.
This improvement is particularly significant in run-time regimes near the convergence radius and for truncations at lower orders, where exponential suppression has not yet become dominant.
Our bound demonstrates e.g.\ that the residual error from truncation at $N = 3$ is an order of magnitude smaller than previously established.
The results of this work may potentially also be usefully applied in other areas of mathematical physics where a tight, rigorous error bound is needed.

We have extrapolated the global constant in \cref{Main_Thm} as the smallest integer that bounds the trees coefficients computed in \cref{sec:compute_trees}. This value looks compatible with a bound credited to~\cite{moan2002backward}, whose bound exhibits a global constant~$\pi$. It would be interesting to verify if the same constant is valid for our tighter bound too.

Despite no converse result showing that the bound in \cref{Main_Thm} is tight, we point out that our proof construction adheres to the faithful representation of the Magnus expansion in the language of binary trees, and pairs each coefficient $\alpha_\tau$ in the general expression of the Magnus term with the corresponding integral coefficient, again defined by $\tau$.
The associated recursion formula is derived without loose approximations, except for $\norm{[H(t_1),H(t_2)]} \leq 2 \, h_{max}^2$ which cannot be improved without making some assumptions on the Hamiltonian.
The scaling expression resulting from this recursion meets very closely the first 24 values as shown in \cref{subfig:24_coefficients}.
These considerations suggest that our \cref{lemma:bound_magnus_term} and \cref{cor:stronger_bound} could be optimal -- aside from the global constant~4 -- for general bounds without knowledge of the Hamiltonian and its nested high-order commutator at different times, which may be quite complex to retrieve.
We leave an explicit converse result to prove tightness for future work.

\section*{\large Acknowledgements}
We thank Nikolas Breuckmann and Quirin Vogel for insightful discussions.
HA and TC were supported in part by the EPSRC Prosperity Partnership in Quantum Software for Simulation and Modelling (grant EP/S005021/1), and by the UK Hub in Quantum Computing and Simulation, part of the UK National Quantum Technologies Programme with funding from UKRI EPSRC (grant EP/T001062/1).
EO~has been supported by the ERC under grant agreement no.~101001976 (project~EQUIPTNT).

\printbibliography
\vspace{1cm}
\appendix

\begin{center}
 {\LARGE \textbf{Appendix} }
 \end{center}

\section{Example: third order Magnus term from binary trees}\label{appen Example 3rd Magnus term}

First to exemplify the graphical formulation in terms of `quasi-binary' trees we enumerate those with $(n\leq10)$-total nodes~\cref{fig table}.

\begin{figure}[t]
\begin{tikzpicture}

\begin{scope}[on background layer, transform canvas={scale=1}]
\node[inner sep=0pt] at (0,0){\includegraphics[trim={0cm 0cm 0cm 0cm},clip,width=1.275\textwidth]{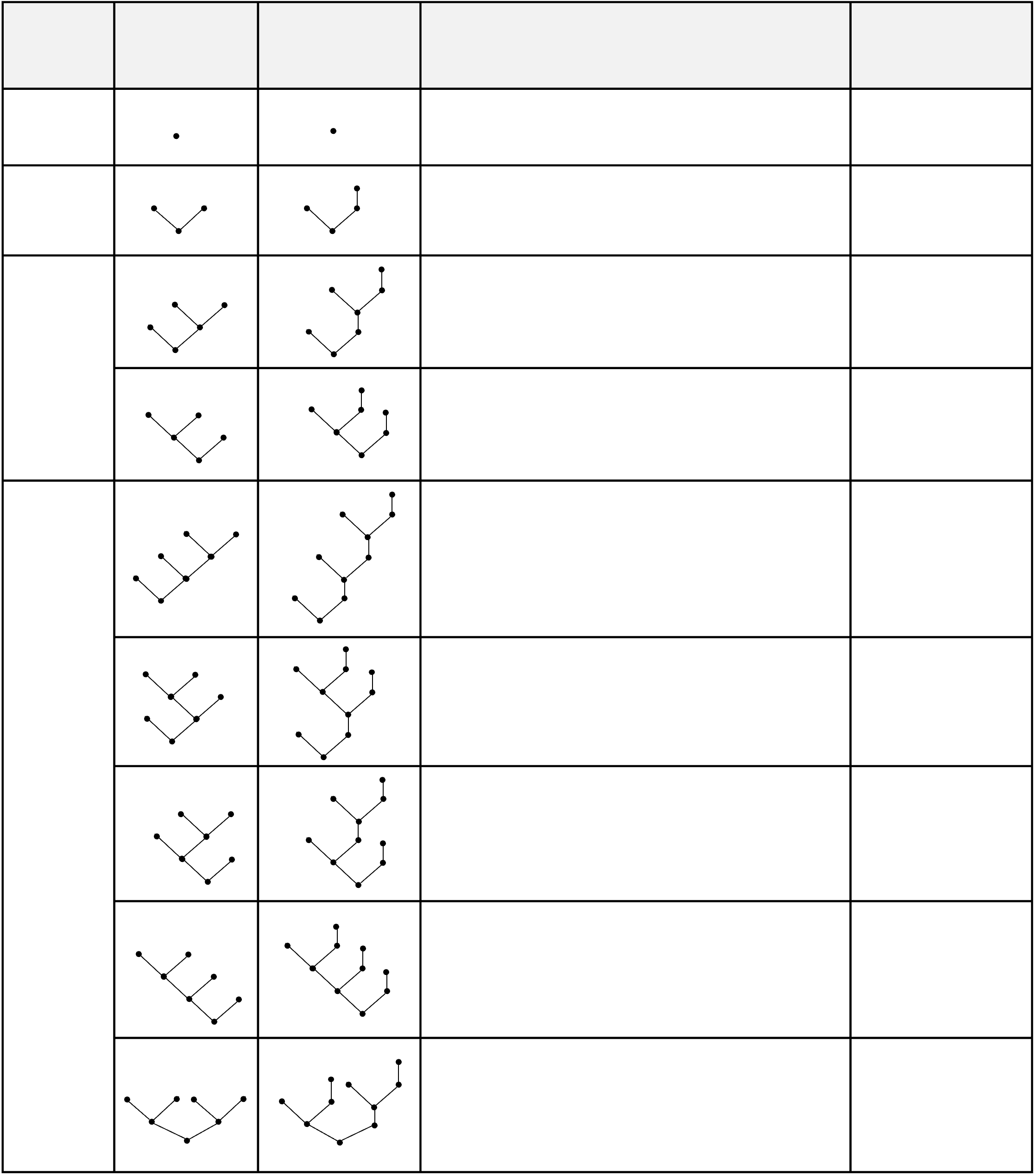}};
\end{scope}

\node [align=left, anchor=west] (0) at (-7.15, 8.1) {$q$};
\node [align=left, anchor=west] (0) at (-7.15, 6.9) {$0$};
\node [align=left, anchor=west] (0) at (-7.15, 5.7) {$1$};
\node [align=left, anchor=west] (0) at (-7.15, 3.3) {$2$};
\node [align=left, anchor=west] (0) at (-7.15, -3.7) {$3$};
\node [align=left, anchor=west] (0) at (-5.6, 8.3) {Binary};
\node [align=left, anchor=west] (0) at (-5.4, 7.8) {tree};
\node [align=left, anchor=west] (0) at (-3.8, 8.3) {Quasi-binary};
\node [align=left, anchor=west] (0) at (-3.4, 7.8) {tree, $\tau$};
\node [align=left, anchor=west] (0) at (0.5, 8.1) {Expression, $H_\tau(t_1)$};
\node [align=left, anchor=west] (0) at (5.7, 8.3) {Name,};
\node [align=left, anchor=west] (0) at (5, 7.8) {Representation};
\node [align=left, anchor=west] (0) at (1.2, 6.9) {$H(t_1)$};
\node [align=left, anchor=west] (0) at (-0, 5.7) {$\left[\int_0^{t_1}dt_2H(t_2),H(t_1) \right]$};
\node [align=left, anchor=west] (0) at (-1.2, 4.1) {$\left[\int_0^{t_1}dt_2\left[\int_0^{t_2}dt_3H(t_3),H(t_2)\right],H(t_1) \right]$};
\node [align=left, anchor=west] (0) at (-1.2, 2.5) {$\left[\int_0^{t_1}dt_2H(t_2)\left[\int_0^{t_1}dt_3H(t_3),H(t_1)\right] \right]$};
\node [align=left, anchor=west] (0) at (-1.4, 0.45) {\tiny$\left[\int_0^{t_1}dt_2\left[ \int_0^{t_2}dt_3\left[ \int_0^{t_3}dt_4H(t_4),H(t_3)\right],H(t_2)\right], H(t_1) \right]$};
\node [align=left, anchor=west] (0) at (-1.4, -1.7) {\tiny$\left[\int_0^{t_1}dt_2\left[\int_0^{t_2}dt_3H(t_3),\left[\int_0^{t_2}dt_4H(t_4),H(t_2)\right]\right],H(t_1)\right]$};
\node [align=left, anchor=west] (0) at (-1.4, -3.7) {\tiny$\left[\int_0^{t_1}dt_2H(t_2),\left[\int_0^{t_1}dt_3 \left[ \int_0^{t_3}dt_4H(t_4),H(t_3)\right],H(t_1)\right]\right]$};
\node [align=left, anchor=west] (0) at (-1.4, -5.7) {\tiny$\left[\int_0^{t_1}dt_2H(t_2),\left[ \int_0^{t_1}H(t_3)dt_3, \left[ \int_0^{t_1}dt_4H(t_4),H(t_1)\right]\right]\right]$};
\node [align=left, anchor=west] (0) at (-1.4, -7.7) {\tiny$\left[\int_0^{t_1}dt_3\left[\int_0^{t_3}dt_4H(t_4),H(t_3)\right],\left[\int_0^{t_1}dt_2 H(t_2),H(t_1)\right]\right]$};
\node [align=left, anchor=west] (0) at (6, 7.1) {$\chi_0$,};
\node [align=left, anchor=west] (0) at (6.1, 6.6) {--};
\node [align=left, anchor=west] (0) at (6, 5.9) {$\chi_1$,};
\node [align=left, anchor=west] (0) at (5.8, 5.4) {$\mathcal{R}(\chi_0)$};
\node [align=left, anchor=west] (0) at (6, 4.3) {$\chi_2$,};
\node [align=left, anchor=west] (0) at (5.8, 3.8) {$\mathcal{R}(\chi_1)$};
\node [align=left, anchor=west] (0) at (6, 2.7) {$\chi_3$,};
\node [align=left, anchor=west] (0) at (5.5, 2.2) {$\mathcal{R}(\chi_0,\chi_0)$};
\node [align=left, anchor=west] (0) at (6, 0.65) {$\chi_4$,};
\node [align=left, anchor=west] (0) at (5.8, 0.15) {$\mathcal{R}(\chi_2)$};
\node [align=left, anchor=west] (0) at (6, -1.5) {$\chi_5$,};
\node [align=left, anchor=west] (0) at (5.8, -2) {$\mathcal{R}(\chi_3)$};
\node [align=left, anchor=west] (0) at (6, -3.5) {$\chi_6$,};
\node [align=left, anchor=west] (0) at (5.5, -4) {$\mathcal{R}(\chi_0,\chi_1)$};
\node [align=left, anchor=west] (0) at (6, -5.5) {$\chi_7$,};
\node [align=left, anchor=west] (0) at (5.2, -6) {$\mathcal{R}(\chi_0,\chi_0,\chi_0)$};
\node [align=left, anchor=west] (0) at (6, -7.5) {$\chi_8$,};
\node [align=left, anchor=west] (0) at (5.5, -8) {$\mathcal{R}(\chi_1,\chi_0)$};
\end{tikzpicture}
\vspace{7pt}
\caption{First few binary trees with $q$ internal nodes, their respective `quasi-binary' tree, $\tau$, and $H_\tau$.}
\label{fig table}
\end{figure}

This appendix will demonstrate how the expression for $M_3(t)$ in~\cref{eqn 3rd magnus 1} is equivalent to the equivalent term in~\cref{eqn_Magnus_tree},
\begin{equation}
M_3(t) = \sum_{\tau\in\mathcal{T}_{7}}\alpha_\tau \int_0^t dt_1 H_\tau(t_1).
\end{equation}
Using~\cref{fig table} we find,
\begin{equation}
\begin{multlined}
M_3(t) = \alpha_{\mathcal{R}(\chi_1)}\int_0^{t_1} dt_1\left[\int_0^{t_1}dt_2\left[\int_0^{t_2}dt_3H(t_3),H(t_2)\right],H(t_1) \right] +\\
\alpha_{\mathcal{R}(\chi_0,\chi_0)}\int_0^{t_1}dt_1 \left[\int_0^{t_1}dt_2H(t_2)\left[\int_0^{t_1}dt_3H(t_3),H(t_1)\right] \right].
\end{multlined}
\end{equation}
The coefficients are given by,
\begin{align}
& \alpha_{\mathcal{R}(\chi_1)} = f_0 \alpha_{\chi_1} =  f_0 f_1^2 = \frac{1}{4}\\
& \alpha_{\mathcal{R}(\chi_0,\chi_0)} = f_2 \alpha_{\chi_0} \alpha_{\chi_0} = f_0^2 f_2 = \frac{1}{12}.
\end{align}
Hence we have an expression for the third Magnus term:
\begin{align}\label{eqn mag 3 a}
\begin{multlined}
M_3(t)= \frac{1}{4}\int_0^t dt_1 \int_0^{t_1}dt_2 \int_0^{t_2}dt_3 \left[ \left[H(t_3),H(t_2) \right],H(t_1)\right] \\
+ \frac{1}{12}\int_0^t dt_1 \int_0^{t_1}dt_2 \int_0^{t_1}dt_3 \left[H(t_3), \left[H(t_2),H(t_1) \right] \right]
 \end{multlined}
\end{align}

Note that this protocol leads will generally lead to superficially different expressions for the Magnus terms.
This is explicit here as~\cref{eqn mag 3 a} appears different to~\cref{eqn 3rd magnus 1} which we restate here for reference,
\begin{align}
\begin{multlined}
M_3(t)=\frac{1}{6}\int_0^t dt_1 \int_0^{t_1} dt_2 \int_0^{t_2} dt_3 \left( \left[H(t_3),\left[H(t_2),H(t_1)\right] \right] \right.\\
\left.+ \left[\left[H(t_3),H(t_2)\right],H(t_1) \right]\right).
\end{multlined}
\end{align}
The expression is not unique due to integral relations and the Jacobi identity, below we demonstrate that simple algebra takes one between the two expressions for the above example.

Taking the second term of~\cref{eqn mag 3 a}:
\begin{subequations}
\begin{align}
&\int_0^t dt_1 \int_0^{t_1}dt_2 \int_0^{t_1}dt_3 \left[H(t_3), \left[H(t_2),H(t_1) \right] \right]
 =\\
& \int_0^t dt_1 \int_0^{t_1}dt_2 \int_0^{t_2}dt_3 \left[H(t_3), \left[H(t_2),H(t_1) \right] \right] \\
&+ \int_0^t dt_1 \int_0^{t_1}dt_2 \int_{t_2}^{t_1}dt_3 \left[H(t_3), \left[H(t_2),H(t_1) \right] \right].
\end{align}
\end{subequations}
Using the integral identity, $\int_0^\alpha dy \int_y^\alpha dx f(x,y) = \int_0^\alpha dx \int_0^x dy f(x,y)$,
\begin{subequations}
\begin{align}
&\int_0^t dt_1 \int_0^{t_1}dt_2 \int_{t_2}^{t_1}dt_3  \left[H(t_3), \left[H(t_2),H(t_1) \right] \right] \\
&=   \int_0^t dt_1 \int_0^{t_1}dt_3 \int_0^{t_3}dt_2 \left[H(t_3), \left[H(t_2),H(t_1) \right] \right] \\
&=  \int_0^t dt_1 \int_0^{t_1}dt_2 \int_0^{t_2}dt_3 \left[H(t_2), \left[H(t_3),H(t_1) \right] \right] \label{eqn relabel index}\\
&=  \int_0^t dt_1 \int_0^{t_1}dt_2 \int_0^{t_2}dt_3 \left[H(t_3), \left[H(t_2),H(t_1) \right] \right] - \left[\left[H(t_3),H(t_2) \right],H(t_1) \right] \label{eqn apply jacobi}
\end{align}
\end{subequations}
where~\cref{eqn relabel index} is a relabelling of indices and~\cref{eqn apply jacobi} uses the Jacobi identity.
Substituting the above back into~\cref{eqn mag 3 a} gives~\cref{eqn 3rd magnus 1}.

\FloatBarrier 
\section[No elementary antiderivative for int df/(f/2 - f/2 cot(f/2) + 1)]{Proof of no elementary solution of $\int \frac{\d f}{ \frac f 2 - \frac f 2 \cot (\frac f 2 ) +1}$}\label{app:no_antiderivative}

We prove by contradiction.
Let $x=f/2$ and $y=\cot x$ and suppose the integral $\int \d x \frac{1}{x-xy+1}$ is elementary.
Then according to a Liouville-type decomposition (cfr.~\cite[eq.~(2.4)]{Bronstein2005}) there exist coprime polynomials $p(x,y)$ and $q(x,y)$, constants $a_j$ and distinct irreducible functions $s_j(x,y)$ such that
\begin{equation}\label{eq:Liouville_Thm}
	 \frac{1}{x-xy+1} = \left(\frac p q \right)' + \sum_{j=1}^{J} a_j \frac{s_j'}{s_j} \ .
\end{equation}
We want to see this identity valid not only for $y=\cot x$, but instead for all polynomials in $x,y$ with $y'=-1-y^2$ (i.e. the differential relation for $\cot x$). For this we consider the fact that $f(x, \cot x) = 0$ for all $0<x<1$ implies $f\equiv 0$.

\medskip

Since both the LHS of \cref{eq:Liouville_Thm} and $g_j$ are square-free, so must be  $\left(\frac p q \right)' = \frac{p}{q}' = \frac{p'}{q} - \frac{pq'}{q^2}$, and thus $q$ must divide $q'=\partial_x q -(1+y^2)\partial_y q $. To match the degrees in $x$ and $y$, it follows that $q'=(b+c y)q$ for some constants~$b$ and~$c$.
Let the differential equations for $y^2= -1$ be $q^{\pm} (x) = q(x,\pm \i)$ so that $q'(x, \pm \i) = \partial_x q^{\pm} = (b \pm \i  c) q^{\pm}$. This gives the two solutions $q^{\pm} (x) = A^{\pm} \e^{(b \pm  \i c )x}$. We consider the three cases:
\begin{enumerate}[label=(\arabic*)]
	\item If both $A^{+}$ and $A^{-}$ are non-zero, then since $q^{\pm}$ are polynomials in $x$ both constants $b=c=0$, implying $q'=0$ and thus $q\in \mathbb{C}$.

	\item If $A^{-} = 0$, then $q^{-}(x)=0$. By the Factor Theorem, $-\i$ is a then a root and we can write $q=(y+ \i) r$ for some polynomial $r$. The fact that $q$ divides $q'=(y+\i) r' - (y^2+1)r$ implies that $r$ divides $r'$.

	\item If $A^{+} = 0$ we are in the analogue case as~(2), with $q=(y-\i)r$.
\end{enumerate}
If either case (2) or (3) occur, we can then replace $q$ with $r$ and proceed with the next iteration, eventually arriving at case~(1). We can thus write $q = A(y+\i)^{m} (y-\i)^n$ for $A \in \mathbb C$ and positive integers $m,n$.

Similarly, $s_j$ being irreducible and that they must divide $a_j s_j$ implies that they can only be $1$, $y\pm \i$ or $x-xy+1$.

\medskip

Thus,
\begin{adjustwidth}{-2cm}{-2cm}
\begin{subequations}
\begin{align}
	\frac{1}{x -xy+1}
	&=
	\frac{p'}{q} - \frac{pq'}{q^2} + a_1 \frac{1}{y + \i} y' + a_2 \frac{1}{y + \i} y' + a_3 \frac{1-y-xy'}{x-xy+1} \\
	&=
	\frac{p'}{A(y + \i)^m(y- \i)^n} - \frac{p \, A (y + \i)^{m-1} (y- \i)^{n-1}}{\left[A(y + \i)^m(y- \i)^n \right]^2}
	\Big( m(y- \i) +n (y+ \i) \Big) (-1)(y^2+1) \\
	 &-\left( \frac{a_1}{y + \i} + \frac{a_2}{y - \i}\right) (1 + y^2)
	 + a_3 \frac{1-y+x(1+y^2)}{x-xy+1}  \\
	 &=
	 \frac{p' + p[m(y + \i) +n (y-\i)]}{A(y + \i)^m(y- \i)^n}-a_1(y - \i) -a_2(y+\i) +a_3 \frac{1+x-y(1-xa)}{x-xy+1} \\
	 &=
	  \frac{p' + p[y(m+n) + \i (m-n)]}{A(y + \i)^m(y- \i)^n} -a_1(y - \i) -a_2(y+\i) + a_3(1-y)+ a_3 \frac{2xy}{x-xy+1} \ .
\end{align}
\end{subequations}
\end{adjustwidth}
Now, multiplying both LHS and RHS by $x-xy+1$ and evaluating at the singular point $y=\frac{x+1}{x}$ where this factor vanishes, yields
\begin{equation}
	1 = a_3 2 (x+1)
\end{equation} which results in a contradiction of the initial condition for $a_3$ to be a constant~$\lightning$

\end{document}